\newif\iftikz
\newtheorem{theorem}{Theorem} 
\newtheorem{corollary}[theorem]{Corollary}
\newtheorem{lemma}[theorem]{Lemma}
\newtheorem{theorem*}{Theorem}
\theoremstyle{definition}
\newtheorem{definition}[theorem]{Definition}
\def\EmanueleViolaDir{0}}
\newcommand{\poly}{\mathrm{poly}}
\newcommand{\zo}{\{0, 1\}}
\newcommand{\e}{\epsilon}
\newcommand{\eps}{\epsilon}
\newcommand{\acz}{\mathrm{AC}^0}
\newcommand{\ncz}{\mathrm{NC}^0}
\begin{document}

\begin{titlepage}
\title{Local reductions}
\author{Hamid Jahanjou\thanks{Supported by NSF grants CCF-0845003, CCF-1319206. Email:
\texttt{\{hamid,enmiles,viola\}@ccs.neu.edu}} \and Eric
Miles\footnotemark[1] \and Emanuele
Viola\footnotemark[1]}

\maketitle

\begin{abstract}
We reduce non-deterministic time $T \ge 2^n$ to a 3SAT
instance $\phi$ of quasilinear size $|\phi| = T \cdot
\log^{O(1)} T$ such that there is an explicit circuit $C$
that on input an index $i$ of $\log |\phi|$ bits outputs
the $i$th clause, and each output bit of $C$ depends on
$O(1)$ input bits. The previous best result was $C$ in
NC$^1$. Even in the simpler setting of polynomial size
$|\phi| = \poly(T)$ the previous best result was $C$ in
AC$^0$.

More generally, for any time $T \ge n$ and parameter $r
\leq n$ we obtain $\log_2 |\phi| = \max(\log T, n/r) +
O(\log n) + O(\log\log T)$ and each output bit of $C$ is
a decision tree of depth $O(\log r)$.

As an application, we tighten Williams' connection
between satisfiability algorithms and circuit lower
bounds (STOC 2010; SIAM J.~Comput.~2013).
\end{abstract}

\thispagestyle{empty}

\end{titlepage}

\newpage

\section{Introduction and our results} \label{s-intro}

The efficient reduction of arbitrary non-deterministic
computation to 3SAT is a fundamental result with
widespread applications.  For many of these, two aspects
of the efficiency of the reduction are at a premium.  The
first is the length of the 3SAT instance.  A sequence of
works shows how to reduce non-deterministic time-$T$
computation to a 3SAT instance $\phi$ of quasilinear size
$|\phi| = \tilde O(T) := T \log^{O(1)} T$
\cite{HennieS66,Schnorr78,PippengerF79,Cook88,GurevichS89,Robson91}.
This has been extended to PCP reductions
\cite{Ben-SassonGHSV05,Mie09,Ben-SassonCGT12Fast,Ben-SassonCGT12On}.

The second aspect is the computational complexity of
producing the 3SAT instance $\phi$ given a machine $M$,
an input $x \in \zo^n$, and a time bound $T = T(n) \ge
n$. It is well-known and easy to verify that a $\phi$ of
size $\poly(T)$ is computable even by circuits from the
restricted class NC$^0$.
More generally, Agrawal, Allender, Impagliazzo, Pitassi,
and Rudich show \cite{AAIPR01} that such NC$^0$
reductions exist whenever AC$^0$ reductions do.


A stronger requirement on the complexity of producing
$\phi$ is critical for many applications.  The
requirement may be called \emph{clause-explicitness}.  It
demands that the $i$th clause of $\phi$ be computable,
given $i \le |\phi|$ and $x \in \zo^n$, with resources
$\poly(|i|) = \poly \log |\phi| = \poly \log T$.  In the
case $|\phi| = \poly(T)$, this is known to be possible by
an unrestricted circuit $D$ of size $\poly(|i|)$.  (The
circuit has either random access to $x$, or, if $T \ge
2^n$, it may have $x$ hardwired.) As a corollary,
so-called succinct versions of NP-complete problems are
complete for NEXP.
Arora, Steurer, and Wigderson \cite{AroraSW09} note that
the circuit $D$ may be taken from the restricted class
AC$^0$.
They use this to argue that, unless
EXP = NEXP, standard NP-complete graph problems cannot be
solved in time $\poly(2^n)$ on graphs of size $2^n$ that
are described by AC$^0$ circuits of size $\poly(n)$.

\medskip

Interestingly, applications to unconditional complexity
lower bounds rely on reductions that are clause-explicit
and simultaneously optimize the length of the 3SAT
instance $\phi$ and the complexity of the circuit $D$
computing clauses. For example, the time-space tradeoffs
for SAT need to reduce non-deterministic time $T$ to a
3SAT instance $\phi$ of quasilinear size $\tilde O(T)$
such that the $i$th clause is computable in time
$\poly(|i|) = \poly \log |\phi|$ and space $O(\log
|\phi|)$, see e.g.~\cite{FLvMV05} or Van Melkebeek's
survey \cite{Melkebeek06}.  More recently, the importance
of optimizing both aspects of the reduction is brought to
the forefront by Williams' approach to obtain lower
bounds by satisfiability algorithms that improve over
brute-force search by a super-polynomial factor
\cite{Williams13-Improving,Williams10acc,Williams11,SanthanamW12,Williams13}.
To obtain lower bounds against a circuit class $C$ using
this technique, one needs a reduction of
non-deterministic time $T = 2^n$ to a 3SAT instance of
size $\tilde O(T)$ whose clauses are computable by a
circuit $D$ of size $\poly(n)$ that belongs to the class
$C$.  For example, for the ACC$^0$ lower bounds
\cite{Williams10acc,Williams13} one needs to compute them
in ACC$^0$. However it has seemed ``hard (perhaps
impossible)'' \cite{Williams10acc} to compute the clauses
with such restricted resources.

Two workarounds have been devised
\cite{Williams10acc,SanthanamW12}. Both exploit the fact
that, under an assumption such as P $\subseteq$ ACC$^0$,
non-constructively there does exist such an efficient
circuit computing clauses; the only problem is
constructing it.  They accomplish the latter by
guessing-and-verifying it \cite{Williams10acc}, or by
brute-forcing it \cite{SanthanamW12}
(cf.~\cite{AllenderK10}).  The overhead in these
arguments limits the consequences of satisfiability
algorithms: before this work, for a number of
well-studied circuit classes $C$ (discussed later) a
lower bound against $C$ did not follow from a
satisfiability algorithm for circuits in $C$.

\subsection{Our results} \label{s-our-results}

We show that, in fact, it is possible to reduce
non-deterministic computation of time $T \ge 2^n$ to a
3SAT formula $\phi$ of quasilinear size $|\phi| = \tilde
O(T)$ such that given an index of $\ell = \log |\phi|$
bits to a clause, one can compute (each bit of) the
clause by looking at a constant number of bits of the
index. Such maps are also known as local, NC$^0$, or junta.
More generally our results give a trade-off between
decision-tree depth and $|\phi|$.
The results apply to any time bound $T$,
paying an inevitable loss in $|x| = n$ for $T$ close to
$n$.

\begin{restatable}[Local reductions]{theorem}{texplicitreductions} \label{t-explicit-reductions}
Let $M$ be an algorithm running in time $T = T(n)\geq n$ on
inputs of the form $(x,y)$ where $|x| = n$.  Given $x \in
\zo^n$ one can output a circuit $D : \zo^\ell \to \zo^{3v
+ 3}$ in time $\poly(n, \log T)$ mapping an index to a clause of a 3CNF $\phi$ in
$v$-bit variables, for $v = \Theta(\ell)$, such that
\begin{enumerate}
\item $\phi$ is satisfiable iff there is $y \in
\zo^T$ such that $M(x,y) = 1$, and
\item for any $r \leq n$ we can have $\ell = \max(\log T, n/r) + O(\log n) + O(\log\log T)$ and each output bit of $D$ is a decision tree of depth $O(\log r)$.
\end{enumerate}
\end{restatable}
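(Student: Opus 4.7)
My plan is to follow the Cook--Levin strategy of encoding computation as a collection of local consistency constraints, but to engineer both the simulation model and the clause-indexing scheme so that the index-to-clause map itself becomes extremely local.

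First I would simulate $M$ on input $(x,y)$ in time $\tilde O(T)$ by a model whose data flow is oblivious and bit-addressable, most naturally a RAM with $T$ memory operations whose accesses are validated by routing through a fixed permutation/sorting network of size $\tilde O(T)$, such as Batcher's bitonic sort. The crucial structural property is that such networks have a rigidly regular wiring: given an index of the form (layer, position within layer), the pair of wires connected by the corresponding comparator is obtained by simple bit-manipulations of the position. This is what allows the index-to-clause map to be decided by constant-depth (or low-depth) tests, since the ``wire lookup'' underlying each clause is itself a bit permutation.

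Next, I would introduce one $v$-bit variable per cell/register of the simulation per time layer, with $v = \Theta(\ell)$, and split the clauses into a few families: initial clauses pinning the first layer to the hardwired $x$; transition clauses linking consecutive layers of the RAM simulation; and comparator clauses of the routing network. A clause index $i \in \zo^\ell$ is parsed as a tuple (type, layer, position, intra-clause offset). In each family, the names of the $O(1)$ variables appearing in the clause are obtained by $O(1)$-bit functions of (layer, position), so for $r=1$ the output of $D$ is NC$^0$ and each output bit depends on $O(1)$ input bits.

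To obtain the trade-off, I would pack $r$ cells into each variable and $r$ input bits into each initial clause. Extracting one of the $r$ sub-cells of a packed block, or selecting one of $n/r$ input chunks, then requires $\log r$ bits of the index read by a binary-search decision tree of depth $O(\log r)$, while the layer-and-position coordinates continue to cost $\log T$ bits and the input packing costs $n/r$ bits, yielding $\ell = \max(\log T, n/r) + O(\log n) + O(\log\log T)$. The main obstacle, I expect, is making the heterogeneous clause families (input, transition, network) share a single uniform indexing scheme in which every coordinate-extraction needed to emit a clause fits in decision-tree depth $O(\log r)$ without an extra $\poly\log$ factor leaking in. This forces a carefully coordinated choice of the network, the block layout, and the way $x$ is hardwired, so that the $\max(\log T, n/r)$ bound is tight and the depth is controlled.
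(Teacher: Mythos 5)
Your high-level plan (simulate $M$ via a RAM whose memory accesses are verified by a quasilinear-size permutation network, index clauses by (type, layer, position), and emit textbook gate-consistency clauses) is indeed the paper's starting point, but the steps where you claim locality are exactly where the real obstacles lie, and your proposal does not overcome them. First, the claim that in a Batcher-type network ``the pair of wires connected by the comparator is obtained by simple bit-manipulations of the position'' does not give constant locality: which bit of the position is flipped depends on the layer index, so each output bit of the wire-lookup must read $\Theta(\log\log T)$ bits of the index, and moving between layers requires incrementing a $\Theta(\log\log T)$-bit counter, which is not local in binary. This is why the paper abandons sorting networks (they get only depth-$O(\log\log T)$ trees, or constant locality at size $T^{1+\epsilon}$) in favor of De Bruijn routing networks, whose connections are shifts and XORs, with a redundant carry representation for the column counter and nondeterministically guessed switch bits plus dummy-configuration checks. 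Second, your ``transition clauses linking consecutive layers'' run into the plus-one problem: to emit a clause relating layer $t$ and layer $t+1$, the variable names must contain $t+1$, and addition by $1$ on a $\log T$-bit field is not computable by constant-depth (or depth-$O(\log r)$, $r\le n$) decision trees. The paper circumvents this by using the routing network itself to realize the next-configuration map, so that the two configurations being compared sit at the same position of the input and output columns and no increment on $\log T$ bits is ever performed; nothing in your scheme addresses this.

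Two further gaps. (i) Your trade-off mechanism for the input $x$ does not work as stated: the 3CNF is over Boolean variables with $v$-bit names, so you cannot ``pack $r$ cells into each variable,'' and selecting one of $n/r$ input chunks needs $\log(n/r)$ index bits, not $\log r$, so a depth-$O(\log r)$ tree cannot determine which bit of $x$ a clause must pin down. The paper instead builds a separate bit-fetching circuit whose gate labels carry an $(n/r)$-bit one-hot string together with a $\log r$-bit segment counter and a machine configuration; the value $x_i$ is produced only at the end of a long chain of Copy gates, each step reading only the $\log r$ segment bits plus $O(1)$ other bits. (ii) The per-step consistency checks in the quasilinear simulation compare $O(\log T)$-bit configurations and are $\mathrm{poly}\log T$-size circuits, not single clauses whose variables are ``$O(1)$-bit functions of (layer, position)''; making their internal wiring local requires the paper's spreading-computation technique (gate labels are configurations of the log-space machine computing the circuit's connections, so computing a child is one machine step), together with a timestep field to prevent invalid labels from creating cycles that would render $\phi$ unsatisfiable. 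Without these ingredients the claimed $O(\log r)$ locality at quasilinear size does not follow.
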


Note that for $T= 2^{\Omega(n)}$ we get that $D$ is in
$\ncz$ and $\phi$ has size $2^\ell = T \cdot \log^{O(1)}
T$, by setting $r := n / \log T$.  We also point out that
the only place where locality $O(\log r)$ (as opposed to
$O(1)$) is needed in $D$ is to index bits of the string
$x$.

%
%
%
%


The previous best result was $D$ in NC$^1$
\cite{Ben-SassonGHSV05}.  Even in the simpler setting of
$|\phi| = \poly(T)$ the previous best result was $D$ in
AC$^0$ \cite{AroraSW09}.



\paragraph{Tighter connections between satisfiability and
lower bounds.}  The quest for non-trivial satisfiability
algorithms has seen significant progress recently, see
e.g.~\cite{Williams10acc,Hertli-3SAT,ImpagliazzoMP-AC0-sat,BeameIS-AC0-SAT,ImpagliazzoPS-SAT-sparse,ChenKS-SAT-formula}.
Our results lower the bar for obtaining new circuit lower
bounds from such algorithms.
Previously, a lower bound for circuits of depth $d$ and
size $s$ was implied by a satisfiability algorithm for
depth $c \cdot d$ and size $s^c$ for a constant $c > 1$
(for typical settings of $s$ and $d$).  With our proof it
suffices to have a satisfiability algorithm for depth $d + c$
and size $c \cdot s$ for a constant $c$.  These
results can be extended and optimized for several
well-studied circuit classes.  In particular we obtain
the following new connections.

\begin{corollary} \label{co-sat-c-lb-c}
For each of the following classes $C$, if the satisfiability of circuits in $C$ can be solved in time $2^n/n^{\omega(1)}$ then there is a problem $f \in$ {\em E}$^\mathrm{NP}$ that is not solvable by circuits in $C$:

(1) linear-size circuits,

(2) linear-size series-parallel circuits,

(3) linear-size log-depth circuits,

(4) quasi-polynomial-size SYM-AND circuits.
\end{corollary}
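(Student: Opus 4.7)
The strategy is to instantiate the Williams satisfiability-to-lower-bounds connection \cite{Williams10acc,Williams13,SanthanamW12} with Theorem \ref{t-explicit-reductions} as the new clause-generator, in place of the $\acz$ and $\mathrm{NC}^1$ gadgets used in prior work \cite{AroraSW09,Ben-SassonGHSV05}. Assuming for contradiction that $\mathrm{E}^{\mathrm{NP}}\subseteq C$, the aim is to simulate an arbitrary $L\in\mathrm{NTIME}(2^n)$ in nondeterministic time $2^n/n^{\omega(1)}$, contradicting the nondeterministic time hierarchy.

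On input $x\in\zon$ the simulation proceeds by (i) forming the local 3CNF $\phi$ and its clause-producing circuit $D$ given by Theorem \ref{t-explicit-reductions} applied to the verifier $M$ for $L$ with $T=2^n$ and $r:=n/\log T$, so that $\phi$ has $\ell = n+O(\log n)$-bit clause indices and $D$ is $\ncz$ of $\poly(n)$ size; (ii) nondeterministically guessing a ``witness circuit'' $W\in C$ that is supposed to encode a satisfying assignment to $\phi$ (such $W$ exists by the contradiction hypothesis combined with the derandomization argument of \cite{Williams10acc}); (iii) invoking the hypothesized SAT algorithm for $C$ on the composed circuit $E(i):=\neg(\text{clause }D(i)\text{ is satisfied by the assignment computed by }W)$ to verify that $W$ is indeed a satisfying assignment. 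The total time is $2^\ell/\ell^{\omega(1)} = 2^n/n^{\omega(1)}$, yielding the desired contradiction.

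The only genuinely class-specific step, and the main obstacle, is verifying that $E$ still lies in $C$ within the appropriate size or depth budget whenever $W$ does. Because $D$ is of constant locality and $\poly(n)$ size, $E$ is built from $O(1)$ copies of $W$ by prepending the $\poly(n)$-size $\ncz$ indexer $D$ and appending a constant-fanin $\ncz$ clause-evaluator. Concretely: for (1) linear-size circuits, $|E|=O(|W|)+\poly(n)=O(|W|)$ in the interesting regime $|W|=\Omega(n)$; for (2) linear-size series-parallel circuits, the $O(1)$-size evaluator and the $\poly(n)$-size indexer attach as series-parallel gadgets at the terminals; for (3) linear-size log-depth circuits, the $\ncz$ pre- and post-processing add only $O(1)$ depth, preserving the $O(\log n)$ bound; for (4) quasi-polynomial SYM-AND circuits, the constant-fanin top OR-gate can be folded into the symmetric gate via the standard counting trick with at most a $\poly(n)$ size overhead, staying quasi-polynomial. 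With the weaker $\acz$ or $\mathrm{NC}^1$ reductions of prior work the depth budget in (3) and the size budget in (1)--(2) would already be exceeded by the clause-generator alone, which is precisely why Theorem \ref{t-explicit-reductions} -- and in particular its constant locality -- is essential for all four connections.
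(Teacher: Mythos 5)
Your overall skeleton is the same as the paper's (proof of Theorem \ref{t-acc}): guess a circuit encoding a satisfying assignment of the local 3CNF, compose it with the constant-locality clause generator of Theorem \ref{t-explicit-reductions}, run the hypothesized SAT algorithm on the composition, and contradict the nondeterministic time hierarchy; your class-by-class closure analysis for (1)--(3) also matches the paper's. The genuine gap is your justification in step (ii) for why the witness circuit $W \in C$ exists: you invoke ``the derandomization argument of \cite{Williams10acc}'', i.e.\ the easy-witness machinery of \cite{ImpagliazzoKaWi01} and its extensions. That machinery is what one needs for NE/NEXP lower bounds, and it is \emph{not} available for linear-size circuit classes; the paper explicitly raises extending \cite{ImpagliazzoKaWi01} to linear-size circuits as an open problem, which is exactly why Corollary \ref{co-sat-c-lb-c} is stated for E$^{\mathrm{NP}}$ rather than NE for classes (1)--(3). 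As written, a key step of your argument therefore rests on a tool unavailable for three of the four classes. The repair is simpler and is the paper's actual route: the function mapping $(x,i)$ to the $i$th bit of the lexicographically first satisfying assignment of $\phi_x$ (or $0$ if none) is computable in E$^{\mathrm{NP}}$, so the contradiction hypothesis E$^{\mathrm{NP}} \subseteq C$ \emph{by itself} yields the circuit, and hardwiring $x$ gives $W$; no derandomization enters, and soundness is automatic since an unsatisfiable $\phi_x$ admits no assignment at all.

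A secondary issue is item (4): ``folding the constant-fanin top OR into the symmetric gate by a counting trick'' glosses over the fact that the composed circuit contains \emph{three} separate SYM-AND circuits (the copies of $W$), plus NC$^0$ parts both below their AND layers (the variable-index bits) and at the top (the negation bits and the clause check). The paper handles this through the representation of \cite{BeT94}: quasipolynomial SYM-AND circuits are functions $f(p(x))$ for a polylog-degree polynomial $p$ with quasipolynomially bounded coefficients; the NC$^0$/decision-tree parts are constant-degree polynomials absorbed into $p$, and the $O(1)$ resulting functions $f_i(p_i(x))$ are merged into one $f(p(x))$ by setting $p := \sum_i M^i p_i$ with $M$ larger than the magnitude of any $p_i$. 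Your sketch points in this direction, but it should explain how the three symmetric gates are combined into one, not only how the top OR is absorbed.
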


Recall that available size lower bounds for unrestricted
circuits are between $3n-o(n)$ and $5n-o(n)$, depending on the
basis \cite{Blum84,LachishR01,IwamaM02}.
In 1977 Valiant \cite{Val77} focused attention on classes
(2) and (3).  (Some missing details about series-parallel
graphs are provided in \cite{Calabro08}.)  The class (4)
contains ACC \cite{Yao90,BeT94}, and can be
simulated by number-on-forehead protocols with
a polylogarithmic number of players and communication \cite{HaG91}.
Williams \cite{Williams10acc} gives a quasilinear-time
algorithm to evaluate a SYM-AND circuit on all inputs.

For class (4) one can in fact obtain $f \in$ NE using the
seminal work by Impagliazzo, Kabanets, and Wigderson
\cite{ImpagliazzoKaWi01} and its extension by Williams
\cite{Williams13-Improving,Williams10acc}.  But to do so
for classes (1)-(3), one would need a strengthening of
\cite{ImpagliazzoKaWi01} to linear-size circuits, which
we raise as an open problem.

It has long been known that the satisfiability of classes
(1)-(3) can be linked to $k$SAT.  Using Corollary
\ref{co-sat-c-lb-c}, we can link $k$SAT to circuit lower
bounds.
Only (1') in the following corollary was known
\cite[Theorem 6.1]{Williams13-Improving}.  Our proof is
different: we obtain (1') immediately from (1) in
Corollary \ref{co-sat-c-lb-c} by the Cook-Levin theorem.
In the corollary, a $k$SAT instance has $n$ variables and
$O(n)^k$ clauses.

\begin{corollary} \label{co-CNF-SAT-lb}
In Items (1), (2), and (3) of Corollary
\ref{co-sat-c-lb-c} we can substitute, respectively, the
following assumptions:

(1') The exponential time hypothesis (ETH)
\cite{ImpagliazzoP01} is false; i.e., for every $\eps
> 0$, 3SAT
is in time $2^{\eps n}$,

(2') The strong exponential time hypothesis (SETH) is
false \cite{ImpagliazzoP01}; i.e., there is $\eps
< 1$ such that for every $k$, $k$SAT
is in time $2^{\eps n}$,

(3') there is $\alpha > 0$ such that $n^{\alpha}$-SAT
is in time $2^{n - \omega(n/\log\log n)}$.
\end{corollary}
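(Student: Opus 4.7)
The plan is to reduce, for each item, satisfiability of the relevant circuit class to $k$SAT (or $n^\alpha$SAT) with tight control on the number of variables, so that the hypothesised $k$SAT algorithm yields a circuit-SAT algorithm of the form required by Corollary~\ref{co-sat-c-lb-c}; the desired lower bound then follows from that corollary.

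For (1'), as the authors indicate, the Cook--Levin transformation converts any circuit of size $O(n)$ into an equisatisfiable 3CNF on $N = O(n)$ variables. Under the failure of ETH, 3SAT is solvable in time $2^{\eps N}$ for every $\eps > 0$; choosing $\eps$ small enough relative to the hidden constant in $N = O(n)$ yields a linear-size circuit SAT algorithm running in time $2^{(1 - \Omega(1))n} \subseteq 2^n/n^{\omega(1)}$, and Corollary~\ref{co-sat-c-lb-c}(1) concludes.

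For (2'), the hypothesis $\neg$SETH provides only a single fixed $\eps<1$ together with the freedom to take $k$ as large as we wish, so the $O(n)$-variable blow-up of Cook--Levin is too wasteful; we need a reduction to $k$SAT that uses only $(1+\delta)n$ variables, with $\delta$ small enough that $\eps(1+\delta)<1$. Here I would invoke Valiant's structural theorem for series-parallel graphs \cite{Val77,Calabro08}, which shows that linear-size series-parallel DAGs have very small separators / bounded treewidth; combining this with a local Cook--Levin style encoding of each piece introduces fresh variables only for a small cutset, yielding a $k(\delta)$CNF on $(1+\delta)n+O(1)$ variables for $k(\delta)$ depending only on $\delta$. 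The hypothesised $k(\delta)$SAT algorithm then runs in time $2^{\eps(1+\delta)n} \subseteq 2^n/n^{\omega(1)}$, and Corollary~\ref{co-sat-c-lb-c}(2) applies.

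For (3'), I would use the analogous separator/depth-reduction theorem for log-depth linear-size circuits: such circuits admit a set of $O(n/\log\log n)$ ``critical'' wires whose removal leaves subcircuits each expressible as a formula of depth at most $\alpha\log n$, hence equivalent to a CNF of width $n^\alpha$. Declaring the cut wires as fresh variables and conjoining the local encodings yields an $n^\alpha$SAT instance on $N = n + O(n/\log\log n)$ variables; the hypothesised $2^{N - \omega(N/\log\log N)}$-time algorithm then runs in time $2^{n - \omega(n/\log\log n)} \subseteq 2^n/n^{\omega(1)}$, satisfying the premise of Corollary~\ref{co-sat-c-lb-c}(3). The main obstacle I anticipate is locating and verifying the quantitative form of the separator/depth-reduction lemmas for (2') and (3'): in particular (3') is tight, since the number of fresh variables must stay asymptotically below the $\omega(n/\log\log n)$ savings of the assumed algorithm while the residual subcircuits must simultaneously fit within width $n^\alpha$ --- this is where the bulk of the technical work lies; the Cook--Levin step in (1') is immediate by comparison.
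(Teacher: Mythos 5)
Your proposal is correct and is essentially the paper's own argument: (1') is just the Cook--Levin theorem, and (2') and (3') rest on exactly Valiant's wire-removal/depth-reduction lemmas for linear-size series-parallel and for linear-size log-depth circuits \cite{Val77,Calabro08}, combined with Corollary \ref{co-sat-c-lb-c}. The only difference is bookkeeping: the paper enumerates the $n/d$ (resp.\ $O(n/\log\log n)$) removed wires, obtaining an Or of $2^{n/d}$ (resp.\ $2^{O(n/\log\log n)}$) $k$CNFs over the original $n$ variables, whereas you introduce them as fresh variables with consistency clauses to get a single $k$CNF on $(1+o(1))n$ variables; both accountings yield circuit satisfiability in time $2^n/n^{\omega(1)}$ (your tightness worry in (3') is handled because the $\omega(n/\log\log n)$ savings dominates the $O(n/\log\log n)$ fresh variables), though note that the lemma you need is not a separator/treewidth statement but Valiant's edge-removal result guaranteeing that after deleting the wires every remaining path is short, so that each removed wire and each output depends on at most $k$ inputs and removed wires --- the same property the paper's Or-of-CNFs formulation implicitly uses.
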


For context, the best algorithms for $k$SAT run in time
$2^{n(1 - O(1/k))}$ \cite{DantsinDHKKPRS02,PPSZ05}.

Although Corollaries \ref{co-sat-c-lb-c} and
\ref{co-CNF-SAT-lb} are stated in terms of linear-size
circuits, the proofs provide a close correspondence
between the running time for satisfiability and the
parameters of the circuit class.  This is discussed in
more detail later in this section, see ``subsequent
work.''

Finally, we consider the class of polynomial-size
depth-$d$ circuits of threshold gates, which may have
unbounded or bounded weights.  (The latter case
corresponds to Majority.)  Recall that anything computed
by a poly-size depth-$d$ circuit with unbounded weights
can be computed by a depth $d+1$ circuit with bounded
weights \cite{HMPST93,GoldmannHR92}, and that it is not
known if EXP$^\mathrm{NP}$ has poly-size unbounded-weight
circuits of depth $d=2$.  For these classes (and others)
we show that a lower bound for depth $d$ follows from a
satisfiability algorithm for depth $d+2$.

\begin{corollary} \label{co-threshold-depth} Consider unbounded fan-in
circuits consisting of threshold gates (either bounded-
or unbounded-weight).  Let $d$ be an integer.

Suppose that for every $c$, given a circuit of depth
$d+2$ and size $n^c$ on $n$ input bits one can decide its
satisfiability in time $2^n/n^{\omega(1)}$.

Then NE does not have circuits of polynomial size and
depth $d$.
\end{corollary}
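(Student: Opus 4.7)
The plan is to apply Williams' satisfiability-to-lower-bounds framework using the $\ncz$ clause circuit provided by Theorem~\ref{t-explicit-reductions}; the $\ncz$ locality is what keeps the depth overhead to the claimed $+2$. Assume for contradiction that NE has polynomial-size depth-$d$ threshold circuits, and let $L$ be an NTIME$(2^n)$-complete language. The goal is to decide $L$ in nondeterministic time $2^n/n^{\omega(1)}$, contradicting the nondeterministic time hierarchy.

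On input $x \in \zo^n$, I would first apply Theorem~\ref{t-explicit-reductions} with $T = 2^n$ and $r = 1$ to obtain a 3CNF $\phi$ and an $\ncz$ clause circuit $D : \zo^\ell \to \zo^{3v+3}$ with $\ell = n + O(\log n)$ and each output bit of $D$ a function of $O(1)$ input bits. By the assumption that NE has poly-size depth-$d$ threshold circuits, together with an IKW-style lifting~\cite{ImpagliazzoKaWi01,Williams13-Improving}, the satisfying assignment $A : \zo^v \to \zo$ of $\phi$ (when one exists) is computed by some depth-$d$ threshold circuit $W$ of size $\poly(n)$. Nondeterministically guess such a $W$, then build a verifier $V(i)$ that outputs $1$ iff clause $D(i)$ is \emph{violated} under $W$, so that $\phi$ is satisfied by $W$ iff $V$ is unsatisfiable, and run the hypothesized SAT algorithm on $V$. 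Since $V$ has $\ell = n + O(\log n)$ inputs and $\poly(n)$ size (see below), the SAT algorithm takes time $2^\ell/\ell^{\omega(1)} = 2^n/n^{\omega(1)}$; accepting iff some guess of $W$ passes verification then contradicts the nondeterministic time hierarchy.

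The key technical point and main obstacle is the depth bound on $V$. Given $i$, $D(i)$ produces variable indices $v_j(i)$ and signs $s_j(i)$ for $j=1,2,3$, and clause $D(i)$ is violated iff $W(v_j(i)) = s_j(i)$ for all $j$. Because each bit of $D(i)$ depends on only $O(1)$ bits of $i$, its multilinear expansion consists of constantly many monomials of constant degree; substituting into the bottom layer of $W$ turns each bottom threshold gate into a threshold of constant-fan-in AND gates, adding just one layer. A further threshold layer combines the three $W(v_j(i))$'s with the (also $\ncz$) signs $s_j(i)$ via the biconditional/conjunction check, yielding depth $d+2$ overall. This tight accounting is exactly what the $\ncz$ locality of $D$ buys us: with $\acz$~\cite{AroraSW09} or $\mathrm{NC}^1$~\cite{Ben-SassonGHSV05} clause circuits, the absorption into $W$ would inflate the depth non-constantly, defeating the argument and forcing the workarounds of~\cite{Williams10acc,SanthanamW12}.
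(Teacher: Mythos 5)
There is a genuine gap, and it sits exactly at the step the corollary is about. Your depth accounting up to $d+1$ is fine: since each bit of $D$ depends on $O(1)$ bits of the index, the (mutually exclusive) decision-tree paths can be summed into the weights of the bottom threshold gates at the cost of one layer of constant-fan-in ANDs, so each copy of $W(v_j(i))$ has depth $d+1$ --- this matches the paper's argument in Theorem~\ref{t-acc}. But the last claim, that ``a further threshold layer combines the three $W(v_j(i))$'s with the signs $s_j(i)$ via the biconditional/conjunction check,'' does not hold: the required top function is $\bigwedge_{j}\bigl(W(v_j(i))\leftrightarrow s_j(i)\bigr)$, and already a single biconditional (XNOR) of two bits is \emph{not} a linear threshold function, so this check cannot be realized by one threshold gate fed by the $W$-outputs and the sign bits, nor can the XNOR be absorbed into the top gate of $W$ (that would require flipping the gate's polarity as a function of $s_j$). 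Implemented naively it costs two extra layers, giving depth $d+3$ --- precisely the bound of Theorem~\ref{t-acc} that Corollary~\ref{co-threshold-depth} is supposed to improve, so your circuit $V$ does not fit the depth-$(d+2)$ satisfiability hypothesis as justified.

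The paper's proof takes a different route to save the layer: instead of a single assignment circuit, it guesses a circuit $D_x(i,b)$ that is supposed to compute the $i$-th bit XORed with $b$, first verifies consistency by two preliminary SAT calls on the depth-$(d+1)$ circuits $D_x(i,0)\wedge D_x(i,1)$ and $\neg D_x(i,0)\wedge\neg D_x(i,1)$, and then the main ``clause violated'' check becomes a plain AND of three copies of $D_x$ evaluated at $(v_j(i),\neg s_j(i))$, i.e.\ depth $d+2$; having both polarities available at depth $d$ is what eliminates the biconditionals. Your single-$W$ route has the nice feature that no consistency check is needed (any guessed $W$ defines a genuine assignment, so soundness is automatic), and it can in fact be repaired to depth $d+2$: expand over the $O(1)$ index bits that determine $s_j(i)$, use a positively- or negatively-topped copy of $W$ in each case, absorb the case-indicator conjunction of input literals into that top threshold gate via large weights, and observe that the cases are mutually exclusive so the final conjunction over $j$ is a single exact-counting threshold over all these depth-$(d+1)$ subcircuits. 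But as written, with the one-layer claim unsubstantiated (and false as a single-gate statement), the proposal does not establish the corollary.
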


A diagram of some of the classes mentioned above, and
their relative power, can be found in \cite{Viola-map}.

\medskip

Our results have a few other consequences.  For example they imply that the so-called succinct version of various NP-complete problems remain NEXP-complete even if described by an $\ncz$ circuit. In particular we obtain this for 3SAT and 3Coloring.
Our techniques are also relevant to the notion of circuit
uniformity.  A standard notion of uniformity is log-space
uniformity, requiring that the circuit is computable in
logarithmic space or, equivalently, that given an index to a
gate in the circuit one can compute its type and its
children in linear space.  Equivalences with various
other uniformity conditions are given by Ruzzo
\cite{Ruzzo81}, see also \cite{Vol99}.  We consider
another uniformity condition which is stronger than
previously considered ones in some respects.
Specifically, we describe the circuit by showing how to
compute children by an NC$^0$ circuit, i.e.\ a function
with constant locality.

\begin{restatable}[L-uniform $\Leftrightarrow$ local-uniform]{theorem}{tlocaluniformity}
\label{t-local-uniformity} Let $f : \zo^* \to \zo$ be a
function computable by a family of log-space uniform
polynomial-size circuits.  Then $f$ is computable by a
family of polynomial-size circuits $C = \{C_n : \zo^n \to
\zo\}_n$ such that there is Turing machine that on input
$n$ (in binary) runs in time $O(\poly \log n)$ and
outputs a map
$D : \zo^{O(\log n)} \to \zo^{O(\log n)}$ such that\\
(i) $D$ has constant locality, i.e., every output bit of
$D$ depends on $O(1)$ input bits, and \\
(ii) on input a label $g$ of a gate in $C_n$, $D$ outputs
the type of $g$ and labels for each child.
\end{restatable}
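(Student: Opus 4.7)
The strategy is to apply Theorem \ref{t-explicit-reductions} to a deterministic polynomial-time Turing machine computing $f$ and reinterpret the resulting 3CNF as a circuit.

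Since $f$ has a log-space uniform polynomial-size circuit family, $f$ lies in $\mathrm{P}$, so there is a deterministic Turing machine $M$ computing $f$ in time $T = \poly(n)$. Viewing $M$ as a nondeterministic machine whose witness is empty, I would apply Theorem \ref{t-explicit-reductions} to $M$. Crucially, the remark after that theorem says that the $O(\log r)$ decision-tree depth is only needed for the output bits that index a bit of $x$; the output bits encoding the \emph{variable indices} of each clause are always of constant locality. This yields a 3CNF $\phi_x$ of size $\tilde O(T) = \poly(n)$ over $v = O(\log n)$-bit variables, together with a constant-locality clause generator $D_{\mathrm{CNF}}$ that, on an $O(\log n)$-bit clause index, outputs the variable indices of that clause.

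Next, I would turn $\phi_x$ into the circuit $C_n$. The formula is a Cook--Levin-style encoding of $M$'s computation, with input variables $x_1,\dots,x_n$ and auxiliary variables for cells of the tableau. Because $M$ is deterministic, each auxiliary variable $y$ is forced to a fixed Boolean function of $O(1)$ predecessor variables by a constant-size family of ``transition clauses'' mentioning $y$ together with those predecessors. Interpret each variable as a gate: input variables become input gates of $C_n$; each auxiliary variable becomes an internal gate whose children are its predecessors and whose type is the function forced by its transition clauses; the output gate is the variable encoding $M$'s acceptance. The local successor map $D$ is then obtained by composing $D_{\mathrm{CNF}}$ with a constant-size postprocessor: given a gate label $g$, compute by a local rule the $O(1)$ indices of the transition clauses defining $g$, evaluate $D_{\mathrm{CNF}}$ on them to recover the other literals (hence the children), and read off the gate type from the resulting local clause pattern. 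The composition has constant locality and maps $O(\log n)$ bits to $O(\log n)$ bits, and the Turing machine that prints its description runs in $\poly(n,\log T) = \poly\log n$ by Theorem \ref{t-explicit-reductions}.

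The delicate step is to canonically identify, by a constant-locality rule applied to the label of $g$ alone, the ``defining'' transition clauses of $g$. A given auxiliary variable typically appears in many clauses (consistency, boundary, and transition), and only a specific $O(1)$-sized subset expresses $g$ as a function of its predecessors. Handling this amounts to inspecting the clause-indexing scheme of Theorem \ref{t-explicit-reductions} to verify that the defining clauses of $g$ sit at a fixed local offset in the index space; if they do not, a small modification of the underlying encoding should make them so, after which the extraction above goes through.
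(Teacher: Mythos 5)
Your proposal has a structural problem before any details: within this paper, Theorem \ref{t-explicit-reductions} is proved \emph{using} Theorem \ref{t-local-uniformity} --- spreading computation is precisely what makes the configuration-checking circuits $C_j$ of Theorem \ref{thm:GS} locally computable in the proof of the main theorem --- so deriving Theorem \ref{t-local-uniformity} from Theorem \ref{t-explicit-reductions} is circular. Independently of that, the objects do not match the statement you must prove. Theorem \ref{t-explicit-reductions} takes a \emph{fixed} $x$ and produces, in time $\poly(n,\log T)$, a generator for a formula $\phi_x$ whose satisfiability encodes existence of a witness; with $T=\poly(n)$ this is $\poly(n)$ time, not $\poly\log n$, so the claim ``$\poly(n,\log T)=\poly\log n$'' is simply false. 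Moreover the theorem requires a single circuit $C_n$ with $n$ genuine input gates and a describing Turing machine that runs in $\poly\log n$ time given only $n$ in binary, never seeing $x$; in the reduction of Theorem \ref{t-explicit-reductions}, $x$ is hardwired through the bit-fetching circuit of Theorem \ref{t:bit_fetching_tradeoff}, so there are no free variables $x_1,\dots,x_n$ available to promote to input gates.

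The reinterpretation of $\phi_x$ as a circuit also fails on its own terms, because the formula produced by Theorem \ref{t-explicit-reductions} is not a Cook--Levin tableau. Its variables correspond to gates of the sort-and-check circuit: guessed configurations $z_1,\dots,z_{T'}$ and the control bits of the routing network of Theorem \ref{t:nondeterministic_DB_in_nc0}. Even for a deterministic $M$, the switch settings are not determined at all (many settings route the same permutation, and bad settings merely produce dummy configurations caught by other checks), and a configuration $z_i$ is forced only globally, not as an $O(1)$-ary function of a locally identifiable set of ``predecessor'' variables. So the step you flag as delicate --- that each auxiliary variable has defining transition clauses sitting at a fixed local offset in the index space --- is exactly the ingredient that does not exist, and no small modification of the encoding supplies it. The paper instead proves Theorem \ref{t-local-uniformity} directly, with no reference to the main reduction: gates of $C_n$ are labeled by configurations of the log-space machine computing connections of the uniform circuit $C'_n$, the map $D$ performs one step of that machine (constant locality, with a carry-explicit timestep so that incrementing is local and invalid labels cannot create cycles), and chains of Copy gates link a parent to the computation producing its child's label. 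Salvaging your route would in effect amount to reproving that construction.
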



\paragraph{Does this paper simplify the proof that NEXP is not in
ACC?} Recall that the proof \cite{Williams10acc} that
NEXP is not in ACC uses as a black-box a result like
Theorem \ref{t-explicit-reductions} but with the
requirement on the efficiency of $D$ relaxed to
polynomial-size circuits.  If one instead uses as
a black-box Theorem \ref{t-explicit-reductions}, one
obtains an arguably more direct proof, reported for
completeness in \S\ref{s-sat-2-lb}.

In fact, to obtain the separation of NEXP from ACC it
suffices to prove a weaker version of Theorem
\ref{t-explicit-reductions} where $D$ is, say, in AC$^0$.
This weaker version has a simpler proof, as explained in
\S\ref{s-intro-techniques}. Independently of our work,
Kowalski and Van Melkebeek proved this AC$^0$ result
(personal communication).

\paragraph{Subsequent work.} The
announcement of our results as (ECCC Technical Report
13-099, July 2013) contained the same results as above
except it did not mention Corollary \ref{co-CNF-SAT-lb}
and items (2) and (4) in Corollary \ref{co-sat-c-lb-c}.
After that announcement several related works have
appeared.  Oliveira's survey \cite{Oliveira13} contains
an alternative connection between satisfiability and
circuit lower bounds, which yields a different proof of
our Corollary \ref{co-threshold-depth} establishing a
depth-2 overhead in that connection.  Williams
\cite{Williams14} shows that the ability to count the
number of satisfying assignments to circuits faster than
brute-force search yields lower bounds against related
circuits.  His connection preserves the type of the gates
in the input layer, a feature which is used to obtain
some new lower bounds.

The work \cite{Ben-SassonV-AC0PCP} builds on our results
and is concurrent with \cite{Williams14}.  It gives a
connection between derandomization and lower bounds that
also preserves the type of the gates in the input layer.
Thus, derandomization (or satisfiability), as opposed to
counting, is sufficient for the lower bounds in
\cite{Williams14}. \cite{Ben-SassonV-AC0PCP} also
improves the depth loss of 2 in Corollary
\ref{co-threshold-depth} to 1.  Finally, they make a step
in the direction we suggested of optimizing the constants
in Item (1) of Corollary \ref{co-sat-c-lb-c}.  In
combination with the standard Cook-Levin reduction to
3SAT, they obtain that if 3SAT is in deterministic time
$c^n$ for any $c < 2^{1/10} = 1.07\ldots$ then E$^{NP}$
does not have circuits of size $3n$ over the standard,
full basis. Note that such a lower bound does not easily
follow from diagonalization because the description
length of a circuit of size $3n$ is superlinear.  (Also
recall the available lower bounds have the form
$3n-o(n)$).  The current record for solving 3SAT
deterministically has $c = 1.33\ldots$ \cite{MakinoTY11},
cf.~\cite{Hertli-3SAT}.

As a corollary to \cite{Ben-SassonV-AC0PCP}, in this
revision we show that even a somewhat more modest
improvement to 3SAT algorithms would imply new lower
bounds for non-boolean functions with range $m=2$ bits.
Such lower bounds do not seem known for any $m = o(n)$,
cf.~\cite{KulikovMM12}.

\begin{corollary}[Corollary to \cite{Ben-SassonV-AC0PCP}]\label{co-3sat-with-3col}
If 3SAT is in time $c^n$ for any $c <
2^{1/7}=1.10\ldots$, then there exists a (non-Boolean)
function $f:\zo^n\rightarrow \zo^2$ in E$^{NP}$ such that
any circuit over the full basis computing it requires
at least $3n$ (non-input) gates.
\end{corollary}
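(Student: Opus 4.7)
The plan is to obtain this corollary by adapting the proof of the Boolean analog in \cite{Ben-SassonV-AC0PCP} (which established $c < 2^{1/10}$ for $3n$-size lower bounds on Boolean functions in E$^{NP}$) to the two-output setting. The starting point is Theorem \ref{t-explicit-reductions}, applied with $T = 2^n$ and $r = n/\log T$ so that the resulting 3SAT reduction has $\ncz$ clause-generating circuit $D$ and size $\tilde O(2^n)$.  This locality is what enables the Williams paradigm: assume for contradiction that a hard function in E$^{NP}$ has a size-$3n$ circuit; use Theorem \ref{t-explicit-reductions} to express ``this circuit computes the correct witness'' as a 3CNF; and apply the hypothesized fast 3SAT algorithm to rule out all such circuits, ultimately contradicting a non-deterministic time hierarchy.

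For Corollary \ref{co-3sat-with-3col}, I would repeat this template but with the target taken to be a $2$-bit-output function $f \in$ E$^{NP}$ rather than a Boolean one. The overall structure is unchanged: reduce an NE-hard problem via Theorem \ref{t-explicit-reductions}, assume a size-$3n$ circuit over the full basis for the two-output witness function, and combine a guess/enumeration of such circuits with the fast 3SAT algorithm to derive a nondeterministic algorithm for NE beating the $2^n$ barrier.  The improvement in the constant from $2^{1/10}$ to $2^{1/7}$ comes from the accounting: a size-$3n$ circuit with two outputs produces two bits of witness per unit of description length, so the per-output-bit description is smaller than in the Boolean case, which relaxes the constraint on $c$ needed to keep enumeration-plus-verification below $2^n/n^{\omega(1)}$.

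The main obstacle is the careful accounting that produces the specific exponent $7$.  Concretely, one must verify (i) that the 3CNF produced by Theorem \ref{t-explicit-reductions}, applied to an NE language encoded via $f$, can be paired with a guessed 2-output circuit so that consistency is expressible as a locally-computable 3CNF of the same asymptotic size; (ii) that the count of distinct size-$3n$ two-output circuits, multiplied by the SAT-solving time on instances of size $N = \tilde O(2^n)$, is at most $2^n / n^{\omega(1)}$ precisely when $c < 2^{1/7}$; and (iii) that the remaining machinery of \cite{Ben-SassonV-AC0PCP} goes through with only this parameter change.  Since the corollary is explicitly labeled as a consequence of \cite{Ben-SassonV-AC0PCP}, I expect these steps to be direct modifications of that paper's arguments rather than new ideas, with the bulk of the effort spent on pinning down the circuit-counting constant responsible for the jump from $10$ to $7$.
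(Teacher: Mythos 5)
There is a genuine gap: your proposal never identifies the mechanism that produces the exponent $7$, and the mechanism you do suggest would not produce it. The paper's proof does not work directly with the 3SAT instance from Theorem \ref{t-explicit-reductions} (or from \cite{Ben-SassonV-AC0PCP}); it first composes the $1$-local (after fixing $O(\log n)$ bits) clause map of \cite[Thm.~2.2]{Ben-SassonV-AC0PCP} with the textbook reduction from 3SAT to 3Coloring, using the crucial structural fact that this reduction has no edges between clause gadgets, so the edge-producing map $D'$ is again $1$-local after fixing $O(\log n)$ bits. The point of passing to 3Color is that it is a \emph{binary} CSP: in the Williams-style argument (as in Theorem \ref{t-acc}) one then needs only \emph{two} copies of the guessed circuit implementing the E$^{NP}$ search algorithm, rather than three, because each constraint (an edge) touches two witness values instead of three. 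Since each witness value is a color in $\{1,2,3\}$, encoded in $2$ bits, the guessed circuit is a $2$-output circuit --- this is where the non-Boolean function $f:\zo^n\to\zo^2$ comes from; the $2$-bit range is a \emph{consequence} of choosing 3Color, not the source of the savings. The accounting is then: a $3n$-gate circuit for $f$ yields a circuit to be tested for satisfiability with $2\cdot 3n$ non-input gates, $n$ inputs, and $O(1)$ extra gates, which Cook--Levin turns into a 3SAT instance on $7n+O(1)$ variables; solving it in time $c^n$ with $c<2^{1/7}$ gives the required $2^n/n^{\omega(1)}$ bound.

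Your proposed explanation --- that a $2$-output circuit gives ``two bits of witness per unit of description length,'' combined with counting size-$3n$ circuits and multiplying by the SAT-solving time --- does not match how the argument runs and cannot yield $7$. The circuit is guessed nondeterministically, so there is no enumeration-times-SAT-time product to optimize; the binding constraint is the number of variables of the Cook--Levin instance built from the copies of the guessed circuit. And if you keep the verification as a 3SAT (ternary) CSP, as in your plan, each clause still involves witness bits at three distinct indices, so you need three evaluations of the guessed circuit no matter how many output bits it has; that gives $3\cdot 3n+n=10n$ variables and only the $2^{1/10}$ threshold already in \cite{Ben-SassonV-AC0PCP}. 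Without the locality-preserving reduction to a $2$-CSP (3Color, chosen precisely because its gadget reduction has no inter-gadget edges, unlike, say, Independent Set), the improvement to $2^{1/7}$ does not materialize.
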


\subsection{Techniques} \label{s-intro-techniques}


\paragraph{Background: Reducing non-deterministic
time $T$ to size-$\tilde O(T)$ 3SAT.}  Our starting point
is the reduction of non-deterministic time-$T$
computation to 3SAT instances of quasilinear size $T' =
\tilde O(T)$.  The classical proof of this result
\cite{HennieS66,Schnorr78,PippengerF79,Cook88,GurevichS89,Robson91}
hinges on the oblivious Turing machine simulation by
Pippenger and Fischer \cite{PippengerF79}. However
computing connections in the circuit induced by the
oblivious Turing machine is a somewhat complicated
recursive procedure, and we have not been able to use
this construction for our results.

Instead, we use an alternative proof that replaces this
simulation by coupling an argument due to Gurevich and
Shelah \cite{GurevichS89} with sorting circuits.  The
first reference that we are aware of for the alternative
proof is the survey by Van Melkebeek \cite[\S
2.3.1]{Melkebeek06}, which uses Batcher's odd-even
mergesort networks \cite{Batcher68}. This proof was
rediscovered by a superset of the authors as a class
project \cite{ViolaNEU-ram2sat}. We now recall it.

Consider any general model of (non-deterministic)
computation, such as RAM or random-access Turing
machines.  (One nice feature of this proof is that it
directly handles models with random-access, aka
direct-access, capabilities.)  The proof reduces
computation to the satisfiability of a circuit $C$. The
latter is then reduced to 3SAT via the textbook
reduction.  Only the first reduction to circuit
satisfiability is problematic and we will focus on that
one here.  Consider a non-deterministic time-$T$
computation.  The proof constructs a circuit of size
$\tilde O(T)$ whose inputs are (non-deterministic guesses
of) $T$ configurations of the machine.  Each
configuration has size $O(\log T)$ and contains the state
of the machine, all registers, and the content of the
memory locations indexed by the registers. This
computation is then verified in two steps.  First, one
verifies that every configuration $C_i$ yields
configuration $C_{i+1}$ assuming that all bits read from
memory are correct. This is a simple check of adjacent
configurations.  Then to verify correctness of read/write
operations in memory, one sorts the configurations by
memory indices, and within memory indices by timestamp.
Now verification is again a simple check of adjacent
configurations.  The resulting circuit is outlined in
Figure \ref{fig:sort-and-check} (for a $2k$-tape
random-access Turing machine). Using a sorting network of
quasilinear size $\tilde O(T)$ results in a circuit of
size $\tilde O(T)$.

\begin{figure}
\scalebox{.7} 
{
\begin{pspicture}(1.5,-14.14)(25.186897,14.16)
\psframe[linewidth=0.04,dimen=outer](4.0,-13.42)(0.0,-14.14)
\usefont{T1}{ptm}{m}{n}
\rput(2.0373437,-13.765){\large $c_1$}
\psdots[dotsize=0.12](12.38,-13.74)
\psdots[dotsize=0.12](12.98,-13.74)
\psdots[dotsize=0.12](13.58,-13.74)
\usefont{T1}{ptm}{m}{n}
\rput(10.480469,-9.85){sort by $Ram_1$ head position}
\pstriangle[linewidth=0.04,dimen=outer](4.5,-13.02)(4.6,1.96)
\usefont{T1}{ptm}{m}{n}
\rput(4.5728126,-12.38){\small head positions,}
\usefont{T1}{ptm}{m}{n}
\rput(4.540625,-12.76){\small  bounded-register tapes}
\usefont{T1}{ptm}{m}{n}
\rput(4.528125,-12.0){\small check state,}
\psframe[linewidth=0.04,dimen=outer](8.98,-13.42)(4.98,-14.14)
\usefont{T1}{ptm}{m}{n}
\rput(7.0173435,-13.765){\large $c_2$}
\psframe[linewidth=0.04,dimen=outer](20.98,-13.42)(16.98,-14.14)
\usefont{T1}{ptm}{m}{n}
\rput(19.067345,-13.765){\large $c_T$}
\pstriangle[linewidth=0.04,dimen=outer](9.5,-13.02)(4.6,1.96)
\usefont{T1}{ptm}{m}{n}
\rput(9.572812,-12.38){\small head positions,}
\usefont{T1}{ptm}{m}{n}
\rput(9.540625,-12.76){\small  bounded-register tapes}
\usefont{T1}{ptm}{m}{n}
\rput(9.528125,-12.0){\small check state,}
\pstriangle[linewidth=0.04,dimen=outer](16.46,-13.02)(4.6,1.96)
\usefont{T1}{ptm}{m}{n}
\rput(16.532812,-12.38){\small head positions,}
\usefont{T1}{ptm}{m}{n}
\rput(16.500626,-12.76){\small  bounded-register tapes}
\usefont{T1}{ptm}{m}{n}
\rput(16.488125,-12.0){\small check state,}
\psline[linewidth=0.04cm](3.1,-13.46)(3.1,-13.0)
\psline[linewidth=0.04cm](5.9,-13.46)(5.9,-13.0)
\psline[linewidth=0.04cm](8.1,-13.46)(8.1,-13.0)
\psline[linewidth=0.04cm](17.88,-13.46)(17.88,-13.0)
\psline[linewidth=0.04cm](1.98,-13.44)(1.98,-10.24)
\psline[linewidth=0.04cm](6.98,-13.44)(6.98,-10.22)
\psline[linewidth=0.04cm](18.98,-13.44)(18.98,-10.24)
\psframe[linewidth=0.04,dimen=outer](4.0,-7.62)(0.0,-8.34)
\usefont{T1}{ptm}{m}{n}
\rput(2.0373437,-7.965){\large $c_1$}
\psdots[dotsize=0.12](12.4,-7.94)
\psdots[dotsize=0.12](13.0,-7.94)
\psdots[dotsize=0.12](13.6,-7.94)
\pstriangle[linewidth=0.04,dimen=outer](4.5,-7.22)(4.6,1.96)
\usefont{T1}{ptm}{m}{n}
\rput(4.494375,-6.96){\small check $Ram_1$ contents}
\psline[linewidth=0.04cm](4.5,-5.3)(4.5,-4.86)
\usefont{T1}{ptm}{m}{n}
\rput(7.1673436,-7.965){\large $c_{87}$}
\psframe[linewidth=0.04,dimen=outer](20.98,-7.62)(16.98,-8.34)
\usefont{T1}{ptm}{m}{n}
\rput(19.167343,-7.965){\large $c_{42}$}
\psline[linewidth=0.04cm](3.1,-7.66)(3.1,-7.2)
\psline[linewidth=0.04cm](5.9,-7.66)(5.9,-7.2)
\psline[linewidth=0.04cm](8.1,-7.66)(8.1,-7.2)
\psline[linewidth=0.04cm](17.88,-7.66)(17.88,-7.2)
\psline[linewidth=0.04cm](10.88,-13.46)(10.88,-13.0)
\psline[linewidth=0.04cm](15.1,-13.46)(15.1,-13.0)
\psline[linewidth=0.04cm](1.98,-9.44)(1.98,-8.32)
\psline[linewidth=0.04cm](6.98,-9.44)(6.98,-8.32)
\psline[linewidth=0.04cm](18.98,-9.46)(18.98,-8.3)
\psline[linewidth=0.04cm](4.5,-11.08)(4.5,-10.64)
\psline[linewidth=0.04cm](9.5,-11.08)(9.5,-10.64)
\psline[linewidth=0.04cm](16.46,-11.08)(16.46,-10.64)
\pstriangle[linewidth=0.04,dimen=outer](9.5,-7.22)(4.6,1.96)
\usefont{T1}{ptm}{m}{n}
\rput(9.494375,-6.96){\small check $Ram_1$ contents}
\psline[linewidth=0.04cm](9.5,-5.3)(9.5,-4.86)
\pstriangle[linewidth=0.04,dimen=outer](16.5,-7.22)(4.6,1.96)
\usefont{T1}{ptm}{m}{n}
\rput(16.494375,-6.96){\small check $Ram_1$ contents}
\psline[linewidth=0.04cm](16.5,-5.3)(16.5,-4.86)
\psline[linewidth=0.04cm](15.08,-7.66)(15.08,-7.2)
\psline[linewidth=0.04cm](10.88,-7.66)(10.88,-7.2)
\pspolygon[linewidth=0.04](0.0,-10.22)(0.4990476,-9.44)(20.460953,-9.44)(20.96,-10.22)
\usefont{T1}{ptm}{m}{n}
\rput(10.480469,-4.05){sort by $Ram_2$ head position}
\psline[linewidth=0.04cm](1.98,-7.64)(1.98,-4.44)
\psline[linewidth=0.04cm](6.98,-7.64)(6.98,-4.42)
\psline[linewidth=0.04cm](18.98,-7.64)(18.98,-4.44)
\psframe[linewidth=0.04,dimen=outer](4.0,-1.82)(0.0,-2.54)
\usefont{T1}{ptm}{m}{n}
\rput(2.0373437,-2.165){\large $c_1$}
\psdots[dotsize=0.12](12.4,-2.14)
\psdots[dotsize=0.12](13.0,-2.14)
\psdots[dotsize=0.12](13.6,-2.14)
\pstriangle[linewidth=0.04,dimen=outer](4.5,-1.42)(4.6,1.96)
\usefont{T1}{ptm}{m}{n}
\rput(4.494375,-1.16){\small check $Ram_2$ contents}
\psline[linewidth=0.04cm](4.5,0.5)(4.5,0.94)
\psframe[linewidth=0.04,dimen=outer](8.98,-1.82)(4.98,-2.54)
\usefont{T1}{ptm}{m}{n}
\rput(7.1673436,-2.165){\large $c_{19}$}
\psframe[linewidth=0.04,dimen=outer](20.98,-1.82)(16.98,-2.54)
\usefont{T1}{ptm}{m}{n}
\rput(19.167343,-2.165){\large $c_{71}$}
\psline[linewidth=0.04cm](3.1,-1.86)(3.1,-1.4)
\psline[linewidth=0.04cm](5.9,-1.86)(5.9,-1.4)
\psline[linewidth=0.04cm](8.1,-1.86)(8.1,-1.4)
\psline[linewidth=0.04cm](17.88,-1.86)(17.88,-1.4)
\psline[linewidth=0.04cm](1.98,-3.64)(1.98,-2.52)
\psline[linewidth=0.04cm](6.98,-3.64)(6.98,-2.52)
\psline[linewidth=0.04cm](18.98,-3.66)(18.98,-2.5)
\pstriangle[linewidth=0.04,dimen=outer](9.5,-1.42)(4.6,1.96)
\usefont{T1}{ptm}{m}{n}
\rput(9.494375,-1.16){\small check $Ram_2$ contents}
\psline[linewidth=0.04cm](9.5,0.5)(9.5,0.94)
\pstriangle[linewidth=0.04,dimen=outer](16.5,-1.42)(4.6,1.96)
\usefont{T1}{ptm}{m}{n}
\rput(16.494375,-1.16){\small check $Ram_2$ contents}
\psline[linewidth=0.04cm](16.5,0.5)(16.5,0.94)
\psline[linewidth=0.04cm](15.08,-1.86)(15.08,-1.4)
\psline[linewidth=0.04cm](10.88,-1.86)(10.88,-1.4)
\pspolygon[linewidth=0.04](0.0,-4.42)(0.4990476,-3.64)(20.460953,-3.64)(20.96,-4.42)
\usefont{T1}{ptm}{m}{n}
\rput(10.420468,4.57){sort by $Reg_k$ head position}
\psframe[linewidth=0.04,dimen=outer](4.0,6.8)(0.0,6.08)
\usefont{T1}{ptm}{m}{n}
\rput(2.0373437,6.455){\large $c_1$}
\psdots[dotsize=0.12](12.4,6.48)
\psdots[dotsize=0.12](13.0,6.48)
\psdots[dotsize=0.12](13.6,6.48)
\pstriangle[linewidth=0.04,dimen=outer](4.5,7.2)(4.6,1.96)
\usefont{T1}{ptm}{m}{n}
\rput(4.444375,7.46){\small check $Reg_k$ contents}
\psline[linewidth=0.04cm](4.5,9.12)(4.5,9.56)
\psframe[linewidth=0.04,dimen=outer](8.98,6.8)(4.98,6.08)
\usefont{T1}{ptm}{m}{n}
\rput(7.057344,6.455){\large $c_{5}$}
\psframe[linewidth=0.04,dimen=outer](20.98,6.8)(16.98,6.08)
\usefont{T1}{ptm}{m}{n}
\rput(19.167343,6.455){\large $c_{99}$}
\psline[linewidth=0.04cm](3.1,6.76)(3.1,7.22)
\psline[linewidth=0.04cm](5.9,6.76)(5.9,7.22)
\psline[linewidth=0.04cm](8.1,6.76)(8.1,7.22)
\psline[linewidth=0.04cm](17.88,6.76)(17.88,7.22)
\psline[linewidth=0.04cm](1.98,4.98)(1.98,6.1)
\psline[linewidth=0.04cm](6.98,4.98)(6.98,6.1)
\psline[linewidth=0.04cm](18.98,4.96)(18.98,6.12)
\pstriangle[linewidth=0.04,dimen=outer](9.5,7.2)(4.6,1.96)
\usefont{T1}{ptm}{m}{n}
\rput(9.444375,7.46){\small check $Reg_k$ contents}
\psline[linewidth=0.04cm](9.5,9.12)(9.5,9.56)
\pstriangle[linewidth=0.04,dimen=outer](16.5,7.2)(4.6,1.96)
\usefont{T1}{ptm}{m}{n}
\rput(16.444374,7.46){\small check $Reg_k$ contents}
\psline[linewidth=0.04cm](16.5,9.12)(16.5,9.56)
\psline[linewidth=0.04cm](15.08,6.76)(15.08,7.22)
\psline[linewidth=0.04cm](10.88,6.76)(10.88,7.22)
\pspolygon[linewidth=0.04](0.0,4.2)(0.4990476,4.98)(20.460953,4.98)(20.96,4.2)
\psline[linewidth=0.04cm](1.98,-1.84)(1.98,1.42)
\psline[linewidth=0.04cm](6.98,-1.84)(6.98,1.38)
\psline[linewidth=0.04cm](18.98,-1.84)(18.98,1.36)
\psline[linewidth=0.04cm](2.0,4.2)(2.0,3.16)
\psline[linewidth=0.04cm](6.98,4.2)(6.98,3.16)
\psline[linewidth=0.04cm](18.98,4.18)(18.98,3.14)
\psdots[dotsize=0.12,dotangle=-89.87178](10.778657,3.0399985)
\psdots[dotsize=0.12,dotangle=-89.87178](10.78,2.44)
\psdots[dotsize=0.12,dotangle=-89.87178](10.7813425,1.8400015)
\psframe[linewidth=0.04,dimen=outer](8.98,-7.62)(4.98,-8.34)
\pstriangle[linewidth=0.04,dimen=outer](23.38,11.0)(3.64,2.76)
\psline[linewidth=0.04cm](21.82,11.02)(21.82,9.56)
\psline[linewidth=0.04cm](24.98,11.02)(24.98,-10.66)
\psline[linewidth=0.04cm](24.36,11.0)(24.36,-4.86)
\psline[linewidth=0.04cm](23.78,11.0)(23.78,0.94)
\psline[linewidth=0.04cm](4.48,-10.64)(24.98,-10.64)
\psline[linewidth=0.04cm](4.48,-4.84)(24.34,-4.84)
\psline[linewidth=0.04cm](4.48,0.94)(23.8,0.94)
\psline[linewidth=0.04cm](4.48,9.58)(21.82,9.58)
\usefont{T1}{ptm}{m}{n}
\rput(23.36875,11.935){\large AND}
\psline[linewidth=0.04cm](23.38,13.7)(23.38,14.14)
\psdots[dotsize=0.12](22.2,10.24)
\psdots[dotsize=0.12](22.8,10.24)
\psdots[dotsize=0.12](23.4,10.24)
\end{pspicture} 
}
\caption{Each of the $T$ configurations has size $O(\log T)$.  The checking circuits
have size $\poly \log T$.  The sorting circuits have size $\tilde O(T)$.  $k$ is a constant.  Hence overall circuit has size $\tilde O(T)$.}
\label{fig:sort-and-check}
\end{figure}
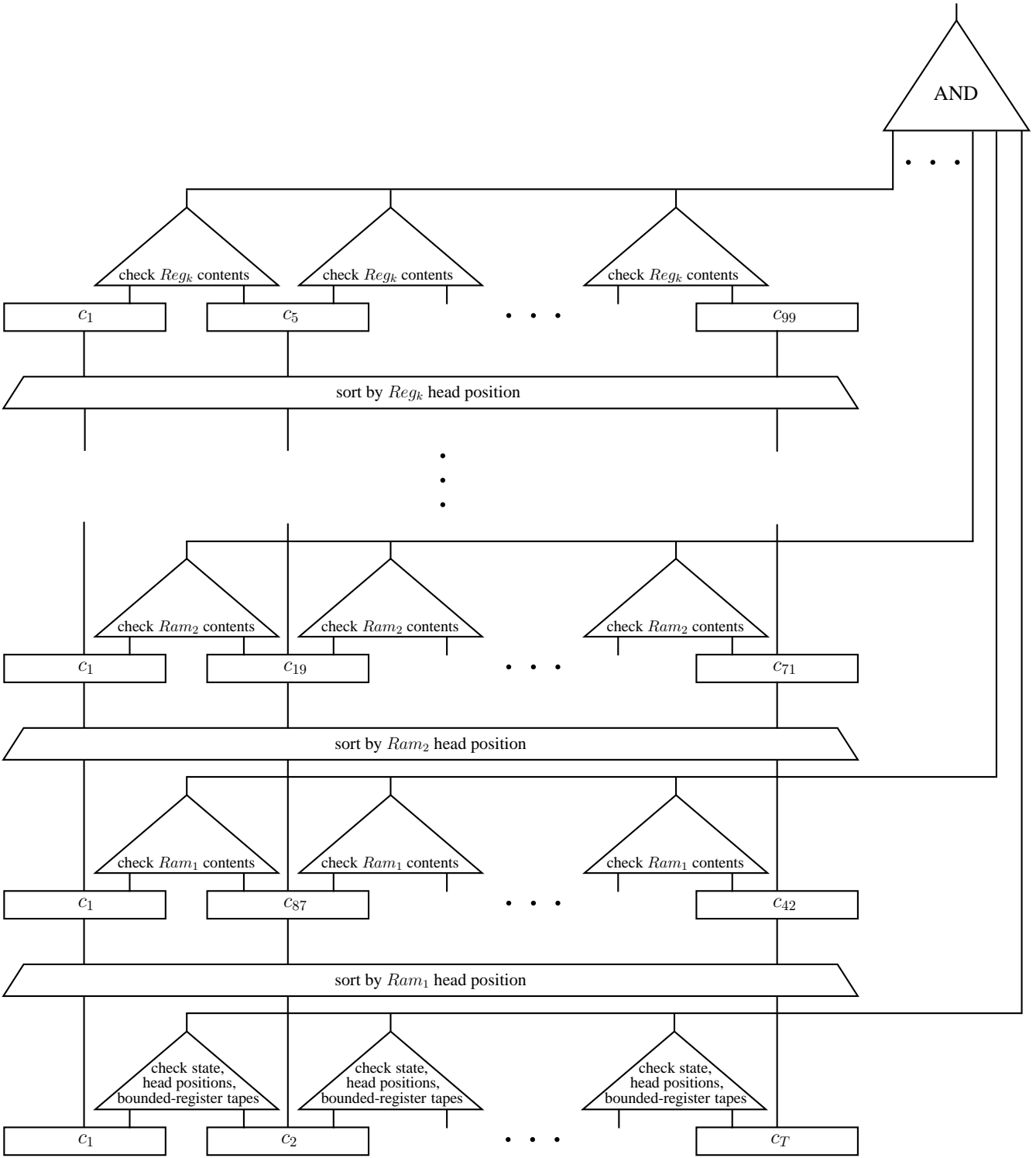


\paragraph{Making low-space computation local.}
Our first new idea is a general technique that we call \emph{spreading
computation}.  This shows that any circuit $C$
whose connections can be computed in
space linear in the description of a gate (i.e., space
log $|C|$) has an equivalent circuit $C'$ of size $|C'| =
\poly |C|$ whose connections can be
computed with constant locality.  This technique is
showcased in \S\ref{s-local-uniform} in the simpler
setting of Theorem \ref{t-local-uniformity}.

The main idea in the proof is
simply to let the gates of $C'$ represent configurations
of the low-space algorithm computing children in $C$. Then
computing a child amounts to performing one step of the
low-space algorithm, (each bit of) which can be done with
constant locality in a standard Turing machine model.
One complication with this approach is that the circuit
$C'$ has many invalid gates, i.e., gates that do
not correspond to the computation of the low-space
algorithm on a label of $C$.  This is necessarily so,
because constant locality is not powerful enough to even
check the validity of a configuration.  Conceivably,
these gates could induce loops that do not correspond to
computation, and make the final 3SAT instance always
unsatisfiable.  We avoid cycles by
including a clock in the configurations, which
allows us to ensure that each invalid
gate leads to a sink.

We apply spreading computation to the various sub-circuits checking
consistency of configurations, corresponding to the
triangles in Figure \ref{fig:sort-and-check}.  These
sub-circuits operate on configurations of size $O(\log T)$ and have size $\poly \log
T$.  Hence, we can tolerate the polynomial increase in
their complexity given by the spreading computation
technique.

There remain however tasks for which we cannot use
spreading computation.  One is the sorting sub-circuit.
Since it has size $> T$ we cannot afford a polynomial
increase.  Another task is indexing adjacent configurations.
We now discuss these two in turn.




\paragraph{Sorting.}
We first mention a natural approach that gets us close
but not quite to our main theorem.  The approach is to
define an appropriate labeling of the sorting network so
that its connections can be computed very efficiently. We
are able to define a labeling of bit-length $t + O(\log
t) = \log \tilde O(T)$ for comparators in the odd-even
mergesort network of size $\tilde O(2^t)$ (and depth
$t^2$) that sorts $T=2^t$ elements such that given a
label one can compute the labels of its children by a
decision tree of depth logarithmic in the length of the
label, i.e.~depth $\log \log \tilde O(T)$.  With a
similar labeling we can get linear size circuits.  Or we
can get constant locality at the price of making the 3SAT
instance of size $T^{1+\e}$.  The details appear in the
separate work \cite{JahanjouMV-sort}.

%
%

To obtain constant locality we use a variant by
Ben-Sasson, Chiesa, Genkin, and Tromer
\cite{Ben-SassonCGT12Fast}.  They replace sorting
networks with routing networks based on De Bruijn graphs.
We note that routing networks have been used extensively
in the PCP literature starting, to our knowledge, with
the work of Polishchuk and Spielman \cite{PolishchukS94}.
They have been used mostly for their algebraic
properties, whereas we exploit the small locality of
these networks.  Specifically, the connections of these
networks involve computing bit-shift, bit-xor, and
addition by 1. The first two operations can easily be
computed with constant locality, but the latter cannot in
the standard binary representation. However, this
addition by 1 is only on $O(\log \log T)$ bits.  Hence we
can afford an alternative, redundant representation which
gives us an equivalent network where all the operations
can be computed with constant locality.  This
representation again introduces invalid labels; those are
handled in a manner similar to our spreading computation
technique.

\paragraph{Plus one.}
Regardless of whether we are using sorting or routing
networks, another issue that comes up in all previous
proofs is addition by $1$ on strings of $> \log T$ bits.
This is needed to index adjacent configurations $C_i$ and
$C_{i+1}$ for the pairwise checks in Figure
\ref{fig:sort-and-check}. As mentioned before, this
operation cannot be performed with constant locality in
the standard representation. Also, we cannot afford a
redundant representation (since strings of length $c \log
T$ would correspond to an overall circuit of size $>
T^c$).

For context, we point out an alternative approach to
compute addition by 1 with constant locality which
however cannot be used because it requires an inefficient
pre-processing.  The approach is to use primitive
polynomials over GF(2)$^{\log T}$. These are polynomials
modulo which $x$ has order $2^{\log T} - 1$.  Addition by
1 can then be replaced by multiplication by $x$, which
can be shown to be local. This is similar to \emph{linear
feedback registers}. However, it is not known how to
construct such polynomials efficiently w.r.t.~their
degrees, see \cite{Shoup92}.

To solve this problem we use routing networks in a
different way from previous works.  Instead of letting
the network output an array $C_1, C_2, \ldots$
representing the sorted configurations, we use the
network to represent the ``next configuration'' map $C_i
\to C_{i+1}$.  Viewing the network as a matrix whose
first column is the input and the last column is the output, we
then perform the pairwise checks on every pair of input
and output configurations that are in the same row.
The bits of these configurations will be in the same
positions in the final label, thus circumventing addition
by one.

\medskip

As we mentioned earlier, for a result such as NEXP not in
ACC \cite{Williams10acc} it suffices to prove a weaker
version of our Theorem \ref{t-explicit-reductions} where
the reduction is computed by, say, an AC$^0$ circuit. For
the latter, it essentially suffices to show that either
the sorting network or the routing network's connections
are in that class.

\paragraph{Organization.}
\S\ref{s-local-uniform} showcases the spreading
computation technique and contains the proof of Theorem
\ref{t-local-uniformity}. In \S\ref{s:routing-networks}
we present our results on routing networks.  In
\S\ref{s:bit_fetching} we discuss how to fetch the bits
of the input $x$.  \S\ref{s-proof-of-the-main-theorem}
includes the proof of our main Theorem
\ref{t-explicit-reductions}.  Finally, \S\ref{s-sat-2-lb}
we discuss connections between satisfiability algorithms
and lower bounds; in particular we outline a proof of the
ACC lower bound, and prove Corollaries
\ref{co-sat-c-lb-c}, \ref{co-CNF-SAT-lb}, and
\ref{co-threshold-depth}.

\section{Spreading computation} \label{s-local-uniform}
In this section we prove Theorem \ref{t-local-uniformity}.

\tlocaluniformity*

We will use the following formalization of log-space
uniformity: a family of polynomial-size circuits $C' =
\{C'_n : \zo^n \to \zo\}_n$ is log-space uniform if there
exists a Turing machine $M$ that, on input $g \in
\zo^{\log |C'_n|}$ labeling a gate in $C'_n$, and $n$
written in binary, uses space $O(\log n)$ and outputs the
types and labels of each of $g$'s children. (Note that
$M$ outputs the types of $g$'s children rather than $g$'s
type; the reason for this will be clear from the
construction below.)

\begin{proof}
Let $C'$ be a log-space uniform family of polynomial-size
circuits and $M$ a log-space machine computing
connections in $C'$. We make the following simplifying
assumptions without loss of generality.
\begin{itemize}
\item Each gate in $C'$ has one of the following five types: And (fan-in-2), Not (fan-in-1), Input (fan-in-0), Constant-0 (fan-in-0), Constant-1 (fan-in-0).
\item For all $n$, $|C'_n|$ is a power of 2. In particular, each $(\log|C'_n|)$-bit string is a valid label of a gate in $C'_n$.
\item $M$'s input is a label $g \in \zo^{\log
    |C'_n|}$, a child-selection-bit $c \in \zo$ that
    specifies which of $g$'s $\leq 2$ children it
    should output, and $n$ in binary. $M$ terminates
    with $\log |C'_n|$ bits of its tape containing
    the child's label, and $3$ bits containing the
    child's type.
\end{itemize}

The local-uniform family $C$ will additionally have
fan-in-1 Copy gates that compute the identity function.
Gates in $C$ are labeled by configurations of $M$, and we
now specify these. Let $q,k,k' = O(1)$ be such that $M$
has $2^q-1$ states, and on input $(g,c,n) \in
\zo^{O(\log|C'_n|)}$ it uses space $\leq k \log n$ and
runs in time $\leq n^{k'}$. A configuration of $M$ is a
bit-string of length $((q+2) \cdot k + 2k') \cdot \log
n$, and contains two items: the {\em tape} and the {\em
timestep}.

The tape is specified with $(q+2) \cdot k \cdot \log n$
bits. Each group of $q+2$ bits specifies a single cell of
$M$'s tape as follows. The first two bits specify the
value of the cell, which is either $0$, $1$, or blank.
The remaining $q$ bits are all zero if $M$'s head is not
on this cell, and otherwise they contain the current
state of $M$.

The timestep is specified with $2k' \cdot \log n$ bits.
In order to allow it to be incremented by a local
function, we use the following representation which
explicitly specifies the carry bits arising from
addition. View the timestep as a sequence of pairs
$$((c_{k'\log n},b_{k'\log n}), (c_{k'\log n-1},b_{k'\log
n-1}), \ldots, (c_1, b_1)) \in \zo^{2k' \log n}.$$ Then
the timestep is initialized with $c_i = b_i = 0$ for all
$i$, and to increment by 1 we simultaneously set $c_1
\leftarrow b_1$, $b_1 \leftarrow b_1 \oplus 1$, and $c_i
\leftarrow b_i \wedge c_{i-1}$ and $b_i \leftarrow b_i
\oplus c_{i-1}$ for all $i > 1$.

It is not difficult to see that there is a local map $\mathsf{Upd} : \zo^{O(\log n)} \to \zo^{O(\log n)}$ that, on input a configuration of $M$, outputs the configuration that follows in a single step. Namely $\mathsf{Upd}$ increments the timestep using the method described above, and updates each cell of the tape by looking at the $O(1)$ bits representing that cell and the two adjacent cells.

We say that a configuration is {\em final} iff the
most-significant bit of the timestep is 1. This
convention allows a local function to check if a
configuration is final. Using the above method for
incrementing the timestep, a final configuration is
reached after $n^{k'} + k' \log n - 1$ steps. We say that
a configuration is {\em valid} if either (a) it is the
initial configuration of $M$ on some input $(g,c) \in
\zo^{\log|C'_n|+1}$ labeling a gate in $C'_n$ and
specifying one of its children, or (b) it is reachable
from such a configuration by repeatedly applying
$\mathsf{Upd}$. (Note that $\mathsf{Upd}$ must be defined
on every bit-string of the appropriate length. This
includes strings that are not valid configurations, and
on these it can be defined arbitrarily.)

\medskip

We now describe the circuit family $C = \{C_n\}_n$ and the local map $D$ that computes connections in these circuits, where $D$ depends on $n$.

$C_n$ has size $n^u$ for $u := (q+2)\cdot k + 2k' =
O(1)$, and each gate is labeled by an $(u \log n)$-bit
string which is parsed as a configuration of $M$. $C_n$
is constructed from $C'_n$ by introducing a chain of Copy
gates between each pair of connected gates
$(g_\mathsf{parent}, g_\mathsf{child})$ in $C'_n$, where
the gates in this chain are labeled by configurations
that encode the computation of $M$ on input
$g_\mathsf{parent}$ and with output $g_\mathsf{child}$.

Let $g \in \zo^{u \log n}$ be a configuration of $M$
labeling a gate in $C_n$. Our convention is that if $g$
is a final configuration then the type of $g$ is what is
specified by three bits at fixed locations on $M$'s tape,
and if $g$ is not a final configuration then the type is
Copy. (Recall that when $M$ terminates, the type of its
output is indeed written in three bits at fixed
locations.) In particular, the type of a gate can be
computed from its label $g$ by a local function.

$D$ computes the children of its input $g$ as follows. If
$g$ is not a final configuration, then $D$ outputs the
single child whose configuration follows in one step from
$g$ using the map $\textsf{Upd}$ described above. If $g$
is a final configuration, $D$ first determines its type
and then proceeds as follows. If the type is And, then
$D$ outputs two children by erasing all but the
$\log|C'_n|$ bits of $M$'s tape corresponding to a label
of a gate in $C'_n$, writing $n$, setting the timestep to
0, putting $M$ in its initial state with the head on the
leftmost cell, and finally setting one child to have $c =
0$ and one child to have $c = 1$. (Recall that $c$ is the
child-selection-bit for $M$.) If the type is Not, then
$D$ acts similarly but only outputs the one with $c=0$.
For any other type, $g$ has fan-in 0 and thus $D$ outputs
no children.

Naturally, the output gate of $C_n$ is the one labeled by
the configuration consisting of the first timestep whose
MSB is 1 and the tape of $M$ containing $(g_\mathsf{out},
t_\mathsf{out}, n)$ where $g_\mathsf{out}$ is the unique
label of $C'$'s output gate and $t_\mathsf{out}$ is its
type. (The remainder of this configuration can be set
arbitrarily.) It is clear that starting from this gate
and recursively computing all children down to the
fan-in-0 gates of $C_n$ gives a circuit that computes the
same function as $C'_n$. Call the tree computed in this
way the {\em valid tree}.

We observe that $C_n$ also contains gates outside of the
valid tree, namely all gates whose labels do not
correspond to a valid configuration. To conclude the
proof, we show that the topology of these extra gates
does not contain any cycles, and thus $C_n$ is a valid
circuit. By avoiding cycles, we ensure that the circuit
can be converted to a constraint-satisfaction problem
(i.e.\ 3SAT); the existence of a cycle with an odd number
of Not gates would cause the formula to be always
unsatisfiable.

Consider a label $g$ of a gate in $C_n$ containing a
configuration of $M$. If $g$ is the label of a gate in
the valid tree, then it is clearly not part of a cycle.
If $g$ is any other label, we consider two cases: either
$g$ is a final configuration or it is not. If $g$ is not
a final configuration, then its descendants eventually
lead to a final configuration $g'$. (This follows because
of the inclusion of the timestep in each configuration,
and the fact that starting from any setting of the
$(c_i,b_i)$ bits and repeatedly applying the increment
procedure will eventually yield a timestep with MSB $=
1$.)  Notice that the tape in $g'$ contains $\log|C'_n|$
bits corresponding to a valid label of a gate in $C'_n$.
(This is because of our convention that any bit-string of
that length is a valid label.  An alternative solution
intuitively connects the gates with MSB $=1$ to a sink,
but has other complications.) Therefore the children of
$g'$ are in the valid tree, and so $g'$ (and likewise
$g$) is not part of a cycle. Similarly, if $g$ is a final
configuration then its children are in the valid tree and
so it is not part of a cycle.
\end{proof}

\section{Routing networks} \label{s:routing-networks}

In this section we show how to non-deterministically implement the sorting subcircuits. We do this in a way so that for every input sequence of configurations, at least one non-deterministic choice results in a correctly sorted output sequence. Further, each possible output sequence either is a permutation of the input or contains at least one ``dummy configuration'' (wlog the all-zero string). Importantly, the latter case can be detected by the configuration-checking subcircuits.

\begin{theorem} \label{t:nondeterministic_DB_in_nc0}
Fix $T=T(n)\geq n$. Then for all $n>0$, there is a circuit
\[S:\left(\zo^{O(\log T)}\right)^T \times \zo^{O(T\log T)} \rightarrow \left(\zo^{O(\log T)}\right)^T \]
of size $T' := T \cdot \log^{O(1)} T$ and a labelling of the gates in $S$ by strings in $\zo^{\log T'}$ such that the following holds.
\begin{enumerate}
\item There is a local map $D : \zo^{\log T'} \times \zo \to \zo^{\log T'}$ such that for every label $g$ of a gate in $S$, $D(g, b)$ outputs the label of one of $g$'s $\leq 2$ children (according to $b$). Further, the type of each gate can be computed from its label in NC$^0$.  The latter NC$^0$ circuit is itself computable in time $\poly \log T$.
\item Given a $(\log T + O(\log\log T))$-bit index into $S$'s output, the label of the corresponding output gate can be computed in $\ncz$. Further, given any input gate label, the corresponding $(\log T + O(\log\log T))$-bit index into the input can be computed in $\ncz$.  These two NC$^0$ circuits are computable in time $\poly \log T$.
\item For every $C=(C_1, ..., C_T)$ and every permutation $\pi : [T] \to [T]$, there exists $z$ such that $S(C,z) = (C_{\pi(1)}, \ldots, C_{\pi(T)})$.
\item For every $C=(C_1, ..., C_T)$ and every $z$, if $(C'_1,\ldots,C'_T) := S(C,z)$ is not a permutation of the input then for some $i$, $C'_i$ is the all-zero string.
\end{enumerate}
\end{theorem}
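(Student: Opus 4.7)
The plan is to instantiate $S$ as a De Bruijn routing network in the style of \cite{Ben-SassonCGT12Fast}, relabeled so that both its topology and its input/output addressing are constant-locality. The skeleton has $\Theta(\log T)$ layers of $\Theta(T)$ nodes; each node at layer $\ell{+}1$ selects one of its two De Bruijn predecessors according to a bit of the non-deterministic advice $z$, and each ``wire'' carries an $O(\log T)$-bit configuration implemented as a bundle of parallel $1$-bit subgates. This yields total size $T' = T \cdot \log^{O(1)} T$.

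Each gate receives a label of $\log T' = \log T + O(\log\log T)$ bits, parsed as a triple $(\ell, p, j)$: a layer index $\ell$, a wire position $p \in \zo^{\log T}$, and an internal subgate index $j$ of constant length. Because crossing a layer requires incrementing $\ell$ and ordinary binary increment is not local, I would encode $\ell$ in the redundant ``bit plus carry'' representation from the proof of Theorem \ref{t-local-uniformity}; this costs only a constant blow-up on the $O(\log\log T)$ bits allotted to $\ell$, and lets a single step of the increment be computed locally. With this labeling, Property~1 is straightforward: internal switch wires are local by construction, and the inter-layer transition $(\ell, p, j_{\mathrm{out}}) \mapsto (\ell{+}1, \sigma_\ell(p), j_{\mathrm{in}})$ uses the local increment together with the De Bruijn permutation $\sigma_\ell$, which is a cyclic shift composed with an XOR by a single bit of $\ell$ --- both constant-locality operations. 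Property~2 follows because input and output gates lie in the first and last layers and their labels embed the index $p$ verbatim, padded with fixed values for the remaining fields; the NC$^0$ circuits extracting or inserting the index have size $O(\log T)$ and can be written down in time $\poly \log T$. Property~3 is the classical completeness of the routing network: for every permutation $\pi$ there is a choice of $z$ for which $S(C, z) = (C_{\pi(1)}, \ldots, C_{\pi(T)})$.

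The delicate step is Property~4. Unlike a bijective switching network, a De Bruijn selector network can respond to a bad $z$ by letting two successor nodes claim the same predecessor, leaving some input dropped. To force the all-zero consequence I would augment each wire with a one-bit ``valid'' tag, initialized to $1$ on every input; whenever a node's two outgoing edges select the same predecessor, the tag on one duplicate is flipped to $0$; an output gate whose tag is $0$ is hardwired to emit the all-zero dummy configuration. Since any non-permutation output has at least one dropped input and hence at least one output whose tag is unset, this yields Property~4. The same mechanism absorbs the invalid labels produced by the redundant encoding of $\ell$, via the sink construction in the proof of Theorem \ref{t-local-uniformity}, so that ghost paths flow into dummy outputs rather than polluting the valid routing backbone. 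The main obstacle is engineering the topology so that (a) each bit of $\sigma_\ell(p)$ can be read from $O(1)$ bits of the redundantly encoded $\ell$ and of $p$, (b) the collision-tagging mechanism is itself strictly local, and (c) invalid labels cannot contaminate valid outputs --- all simultaneously, and within the tight $\log T + O(\log\log T)$ label budget demanded by Property~2.
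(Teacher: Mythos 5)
Your skeleton (a De Bruijn routing network, the redundant carry-encoded counter for the column index, and labels embedding the row index verbatim for input/output addressing) matches the paper's construction, but you diverge at the crucial step, Property~4, and that is where your argument has a gap. The paper does not use selector nodes: every middle node is a genuine $2\times 2$ switch that either passes or swaps its two incoming configurations according to its control bit, the first column injects one all-zero dummy alongside each real configuration, and only the last column selects one of its two inputs. Since the middle layers act as bijections on the multiset of $2T$ values ($T$ real plus $T$ dummy), nothing is ever duplicated or dropped, and a non-permutation output can arise only by a last-column node picking up one of the pre-existing dummies --- so Property~4 holds for \emph{every} setting of the switches, with no collision detection, no tags, and no interaction with Property~3.

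Your selector semantics forces the extra valid-tag machinery, and as written it is incomplete in two places. First, the inference ``any non-permutation output has at least one dropped input and hence at least one output whose tag is unset'' is not the right implication: a dropped input leaves no trace by itself. The correct argument goes through duplication and tag persistence: if two outputs trace back to the same input, their selection paths branch at some node, your rule marks one branch $0$ at that point, and (provided the tag is never reset) the mark propagates to the output, so at most one output copy of any source carries tag $1$; non-permutation then forces a tag-$0$ output. You would need to state and prove this. Second, and more seriously, the tag rule interacts with Property~3: when routing a legitimate permutation along node-disjoint paths, an off-path node can collide with an on-path node (both selecting the same predecessor), and an oblivious tie-break may designate the \emph{on-path} copy as the loser, zeroing out a configuration that must be delivered, so $S(C,z)$ would fail to equal $(C_{\pi(1)},\ldots,C_{\pi(T)})$ for the natural choice of $z$. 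You must argue that the free selector bits can always be set to avoid such collisions; this is in fact true (in the De Bruijn graph the two successors of a node share both of their predecessors, so the off-path sibling of an on-path node can always select the other predecessor), but your proposal neither identifies nor makes this argument. With these two repairs your route works, at the cost of machinery that the paper's dummy-injection construction avoids entirely.
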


We construct this circuit $S$ using a {\em routing network}.

\begin{definition}\label{d:rrnetwork}
Let $G$ be a directed layered graph with $\ell$ columns, $m$ rows, and edges only between subsequent columns such that each node in the first (resp.\ last) $\ell-1$ columns has exactly two outgoing (resp.\ incoming) edges.

$G$ is a {\em routing network} if for every permutation $\pi : [m] \to [m]$, there is a set of $m$ node-disjoint paths that link the $i$-th node in the first column to the $\pi(i)$-th node in the last column, for all $i$.
\end{definition}

Our circuit $S$ will be a routing network in which each node is a $2\times 2$ switch that either direct- or cross-connects (i.e.\ flips) its input pair of configurations to its output pair, depending on the value of an associated control bit.
This network is used to non-deterministically sort the input sequence by guessing the set of control bits. We use routing networks constructed from De Bruijn graphs as given in \cite{Ben-SassonCGT12Fast}.

\begin{definition} \label{d:debruijn}
An {\em $n$-dimensional De Bruijn graph} $DB_n$ is a directed layered graph with $n+1$ columns and $2^n$ rows. Each node is labeled by $(w, i)$ where $w\in \zo^n$ specifies the row and $0\leq i \leq n$ specifies the column. For $i  < n$, each node $(w,i)$ has outgoing edges to $(\mathsf{sr}(w),i+1)$ and $(\mathsf{sr}(w) \oplus 10\cdots 0, i+1)$, where $\mathsf{sr}$ denotes cyclic right shift.

A {\em $k$-tandem $n$-dimensional De Bruijn graph} $DB^k_n$ is a sequence of $k$ $n$-dimensional De Bruijn graphs connected in tandem.
\end{definition}

A proof of the following theorem can be found in
\cite{Ben-SassonCGT12Fast}.

\begin{theorem} \label{t:dbs_in_tandem}
For every $n$, $DB^4_n$ is a routing network.
\end{theorem}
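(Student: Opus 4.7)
The plan is to reduce the claim to the classical rearrangeability of Beneš networks on $2^n$ inputs.

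\textbf{Step 1 (De Bruijn $\leftrightarrow$ butterfly).} I would first establish that a single De Bruijn graph $DB_n$ is structurally equivalent, up to a fixed permutation of the row labels in each column, to a butterfly network $BF_n$ with $2^n$ rows and $n$ switching layers. The key observation is that each column of $DB_n$ performs a cyclic right shift together with an optional flip of the new most-significant bit: tracking an input label through $n$ consecutive columns, every bit passes through the ``exchangeable'' position exactly once, and the accumulated cyclic shift returns to the identity modulo $n$. Hence the set of input-to-output paths realized by $DB_n$ coincides, after the relabeling, with that of $BF_n$.

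\textbf{Step 2 (Tandems $\leftrightarrow$ Beneš).} Using Step~1, I would exhibit a Beneš network $B_n$ (two butterflies back-to-back, with $2n-1$ switching columns, which is the canonical rearrangeable network) as a sub-network of $DB^4_n$. Two of the four tandem copies of $DB_n$ carry the two butterflies of $B_n$; the remaining two copies are configured to realize the identity, which is immediate since each switch in a De Bruijn graph may be set to the ``direct'' option. The slack provided by the extra two tandems absorbs the discrepancies introduced by the row relabelings at the interfaces between consecutive $DB_n$ copies.

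\textbf{Step 3 (Beneš is rearrangeable).} I would then invoke the classical Slepian--Duguid / Waksman theorem: for every permutation $\pi : [2^n] \to [2^n]$ there exists a switch setting of $B_n$ inducing $2^n$ node-disjoint paths from input $i$ to output $\pi(i)$. Pulling these paths back through the fixed relabelings of Step~1 and composing with the identity routing in the two unused tandems yields the required node-disjoint paths in $DB^4_n$, verifying Definition~\ref{d:rrnetwork}.

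\textbf{Main obstacle.} The delicate part is Step~2: one must check that the cumulative cyclic shifts at the interfaces between consecutive $DB_n$ copies compose consistently with the Beneš embedding, and that \emph{node}-disjointness (not merely edge-disjointness) is preserved under the relabelings. An appealing alternative that sidesteps the Beneš black box is a direct induction on $n$ in the Slepian--Duguid style: partition a target permutation $\pi$ on $2^n$ elements into two half-permutations, route these recursively through two copies of $DB_{n-1}$ obtained by fixing the most-significant bit, and use the first and last columns of the outer tandems to send each element into the correct half. This inductive approach would also explain transparently why a constant number of tandems suffices and could simplify the bookkeeping above.
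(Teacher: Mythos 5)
Your Steps 1 and 3 are sound and match the paper's strategy: a single $DB_n$ is indeed isomorphic to the butterfly $B_n$ (the paper uses the explicit map $\psi(w,i) = (\mathsf{sr}^{i-1}(w^R), i)$, which crucially relates first and last columns consistently so that it extends to tandems), and rearrangeability of the Bene\v{s} network is the classical black box invoked. The gap is exactly at your Step 2, and it is not a bookkeeping issue that ``slack'' can absorb. The Bene\v{s} network is $B_n$ followed by the \emph{reversed} butterfly $RB_n$. When you try to realize it inside $DB^4_n$, the first half maps onto one De Bruijn copy via the butterfly isomorphism, but the second half maps onto a copy only via the isomorphism $RB_n \to B_n$ given by bit-reversal of the row labels. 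At the interface this forces output row $w$ of the first half to be connected to input row $w^R$ of the second half, i.e.\ the fixed bit-reversal permutation must actually be \emph{routed} through the intermediate copies. Configuring spare copies ``as the identity'' does nothing toward this, and a single butterfly/De Bruijn copy cannot route an arbitrary fixed permutation (these networks are not rearrangeable on their own; bit-reversal in particular is a hard permutation for one butterfly). The paper's proof supplies precisely the missing ingredient: the nontrivial claim (from Ben-Sasson et al.) that the bit-reversal permutation can be routed on \emph{two} tandem butterflies $B^2_n$, which is why the count is $1$ copy for the $B_n$ half, $2$ copies for bit-reversal, and $1$ copy simulating $RB_n$ through the reversal isomorphism --- four in total. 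Without an argument of this kind your embedding claim, and hence the whole reduction, does not go through. (There is also a quieter issue the paper must handle: whether one wants node- or edge-disjoint routing; the cited bit-reversal routing needs a small modification for the edge-disjoint version.)

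Your proposed alternative --- a direct Slepian--Duguid-style induction on $DB^4_n$ by fixing the most significant row bit --- also does not work as stated: unlike the butterfly, a De Bruijn graph does not decompose into two disjoint copies of $DB_{n-1}$ when a bit is fixed, because every column cyclically shifts \emph{all} the bits, so the ``most significant'' position migrates and the two halves intermix immediately. This non-recursive structure is exactly the reason the paper routes everything through the butterfly isomorphism rather than arguing on the De Bruijn graph directly.
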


To use this in constructing the sorting circuit $S$, we must show how, given the label $(w,i)$ of any node in $DB^4_n$, to compute in $\ncz$ the labels of its two predecessors.

Computing the row portion of each label (corresponding to $w$) is trivially an $\ncz$ operation, as $w$ is mapped to $\mathsf{sl}(w)$ and $\mathsf{sl}(w \oplus 10 \cdots 0)$ where $\mathsf{sl}$ denotes cyclic left shift.

For the column portion (corresponding to $i$), we use the
encoding of integers from Theorem
\ref{t-local-uniformity} that explicitly specifies the
carry bits arising from addition. Namely, we use a
$(2\log(4n))$-bit counter as described there, and number
the columns in reverse order so that the last column is
labeled by the initial value of the counter and the first
column is labeled by the maximum value.  This actually
results in more columns than needed, specifically $4n +
\log(4n)$, due to the convention that the counter reaches
its maximum when the MSB becomes 1. (We use this
convention here to determine when we are at the input
level of the circuit.) However, note that adding more
columns to $DB^4_n$ does not affect its rearrangeability
since whatever permutation is induced by the additional
columns can be accounted for by the rearrangeability of
the first $4n+1$ columns.

The next proof will introduce some dummy configurations
whose need we explain now.  With any routing network, one
can talk about either edge-disjoint routing or
node-disjoint routing.  Paraphrasing \cite[first par
after Def A.6]{Ben-SassonCGT12Fast}, a routing network
with $m$ rows can be used to route $2m$ configurations
using edge-disjoint paths (where each node receives and
sends two configs), or $m$ configurations using
node-disjoint paths (where each node receives and sends
one configuration).  In the former every edge carries a
configuration, while in the latter only half the edges
between each layer carry configurations (which half
depends on the permutation being routed).  However, when
implementing either type of routing with a boolean
circuit, all edges must of course always be present and
carry some data, because they correspond to wires. Thus
for node-disjoint routing, half of the edges between each
layer carry ``dummy configurations'' in our construction,
and it is possible even for the dummy configurations to
appear at the output of the network for certain (bad)
settings of the switches.  This whole issue would be
avoided with edge-disjoint routing (which is arguably
more natural when implementing routing networks with
circuits), but we prefer to rely on existing proofs as
much as possible.

\begin{proof}[Proof of Theorem \ref{t:nondeterministic_DB_in_nc0}]
The circuit $S$ is a De Bruijn graph with $T$ rows and $4 \log T + \log(4 \log T)$ columns as described above. It routes the $T$ configurations specified by its first input according to the set of paths specified by its second input. Each node not in the first or last column is a $2 \times 2$ switch on $O(\log T)$-bit configurations with an associated control bit from $S$'s second input specifying whether to swap the configurations. Each node in the last column has a control bit that selects which of its two inputs to output. Nodes in the first column map one input to two outputs; these have no control bits, and output their input along with the all-zero string.

We label each non-input gate in $S$ by $(t=00, w, i, s, d)$ where $t = 00$ specifies ``non-input'', $(w, i) \in \zo^{\log T} \times \zo^{O(\log\log T)}$ specifies a switch (i.e.\ a node in the De Bruijn graph), and $(s,d) \in \zo^{O(\log \log T)} \times \zo^{O(1)}$ specifies a gate within this switch. For the latter, we view a switch on $O(\log T)$-bit configurations as $O(\log T)$ switches on individual bits; then $s$ designates an $O(1)$-sized bit switch, and $d$ designates a gate within it.

We label each gate in $S$'s first input by $(t=01, w, s)$ where $t = 01$ specifies ``first input'', $w \in \zo^{\log T}$ specifies one of the $T$ configurations, and $s \in \zo^{O(\log \log T)}$ specifies a bit within the configuration.

We label each gate in $S$'s second input by $(t=10, w, i)$ where $t = 10$ specifies ``second input'' and $(w, i) \in \zo^{\log T} \times \zo^{O(\log\log T)}$ specifies a switch.

We take any gate with $t = 11$ to be a Constant-0 gate, one of which is used to output the all-zero string in the first column.

Naturally the labels of $S$'s output gates vary over $w$ and $s$ and have $t = 00$, $i = 0 \cdots 0$, and $d =$ the output gate of a bit switch; these and the input gate labels above give Property 2. Theorem \ref{t:dbs_in_tandem} guarantees that Property 3 holds for some setting of the switches, and it is straightforward to verify that Property 4 holds for any setting of the switches. We now show Property 1, namely how to compute connections in $S$ with a local map $D$.

Suppose $g = (t=00,w,i,s,d)$ is the label of a non-input gate, and let $b\in \zo$ select one of its children. There are four possible cases: (1) the child is in the same $2\times 2$ bit switch, (2) the child is an output gate of a preceding $2 \times 2$ bit switch, (3) the child is a bit of a configuration from $S$'s first input or the all-zero string, or (4) the child is a control bit from $S$'s second input. Since each bit switch has a fixed constant-size structure, the case can be determined by reading the $O(1)$ bits corresponding to $d$ and the MSB of $i$ which specifies whether $g$ is in the first column of the De Bruijn graph.

For case (1), $D$ updates $d$ to specify the relevant gate within the bit switch. For case (2), $D$ updates $w$ and $i$ via the procedures described above, and updates $d$ to specify the output gate of the new bit switch. For cases (3) and (4), $D$ updates $t$ and either copies the relevant portions ($w,s$ or $w,i$) from the rest of $g$ if $t \neq 11$, or sets the rest of $g$ to all zeros if $t = 11$.

Finally we note that as in Theorem \ref{t-local-uniformity}, there are strings that do not encode valid labels in the manner described above, and that these do not induce any cycles in the circuit $S$ due to the way the field $i$ is used.
\end{proof}

\section{Fetching bits}\label{s:bit_fetching}
In this section, we construct a local-uniform circuit
that will be used to fetch the bits of the fixed string
$x \in \zo^n$ in our final construction. Moreover, we
demonstrate a trade-off between the length of the labels
and the locality of the map between them.

\begin{theorem}\label{t:bit_fetching_tradeoff}
For all $x \in \zo^n$ and for all $r\in [n]$, there is a
circuit $C$ of size $2^\ell$ where $\ell = n/r + O(\log
n)$, a labeling of the gates in $C$ by strings in
$\zo^\ell$, and a map $D : \zo^\ell \to \zo^\ell$ each
bit of which is computable by a decision tree of depth
$O(\log r)$ with the following properties:
\begin{enumerate}
\item All gates in $C$ are either fan-in-0 Constant-0 or Constant-1 gates, or fan-in-1 Copy gates. In particular, $C$ has no input gates.
\item There are $n$ output gates $\mathsf{out}_1, \ldots, \mathsf{out}_n$, a Constant-0 gate $\mathsf{g}_0$, and a Constant-1 gate $\mathsf{g}_1$ such that, for all $i \leq n$, repeatedly applying $D$ to the label of $\mathsf{out}_i$ eventually yields the label of $\mathsf{g}_{x_i}$.
\item $\forall i \leq n$: the label of $\mathsf{out}_i$ can be computed in $\ncz$ from the binary representation of $i$.
\item Given $x$ and $r$ the decision trees computing
    $D$, and the $\ncz$ circuit in the previous item can be computed in time $\poly(n)$.
\end{enumerate}
\end{theorem}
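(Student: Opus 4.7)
The plan is to realise the chain from $\mathsf{out}_i$ to $\mathsf{g}_{x_i}$ as a staged computation of $x_i$ from binary$(i)$ that consumes $\log r$ bits of the index per stage, over a total of $O(\log n/\log r)$ stages, carefully arranged so that at each stage the local update is a decision tree of depth $O(\log r)$. Write $i = (j-1)r + k$ with $j\in[n/r]$ and $k\in[r]$, and view $x$ as the matrix $M[b][c] = x_{(b-1)r+c}$. A label has three pieces summing to $\ell = n/r + O(\log n)$ bits: (i)~binary$(i)$ in $\log n$ bits; (ii)~an $n/r$-bit scratch region $S$; (iii)~a unary phase marker $P$ of length $O(\log n/\log r)$ whose single $1$ shifts one position to the right with every application of $D$. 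The label of $\mathsf{out}_i$ puts binary$(i)$ into~(i), zeros into~(ii), and $P = 10\cdots0$ into~(iii); every bit is either a constant or a single input bit of binary$(i)$, so this label is $\ncz$-computable, immediately giving Property~3.

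The map $D$ acts as follows, dictated by the current position of $P$. On the first application, each scratch bit is set to $S_b \gets M[b][k]$: bit $b$'s decision tree branches on the $\log r$ bits of $k$ (read from piece~(i)) and outputs the hardcoded row-$b$ entry of $M$, depth $O(\log r)$. On applications $t = 2,\ldots,\log n/\log r$, read the next $\log r$-bit chunk $c_t$ of $j$ and contract the active prefix by $S_b \gets S_{(b-1)r + c_t}$, depth $O(\log r)$. Once $P$ reaches its ``done'' position, $D$ reads the single surviving bit $S_1$, which equals $x_i$, and emits the label of $\mathsf{g}_0$ or $\mathsf{g}_1$; these are fan-in-$0$ Constants that terminate the chain. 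In every case $D$ simultaneously shifts $P$ right by one, so the chain from $\mathsf{out}_i$ reaches $\mathsf{g}_{x_i}$ in $O(\log n/\log r)+1$ steps, giving Property~2.

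The main obstacle is making each output bit of $D$ a depth-$O(\log r)$ decision tree even though $D$ does different things in each of the $\Theta(\log n/\log r)$ stages. I plan to hardwire into each scratch position $b$'s decision tree a small window of phase-marker positions near its ``home stage'' $T(b) = \lceil\log_r(n/b)\rceil$, using that window to select among an $O(1)$-sized menu of alternatives (freeze, copy, or stage-specific update) before performing a single $\log r$-query selection among $r$ prior bits. Nailing down the precise window, and reusing the scratch in place across stages without needing to read more phase bits than $O(\log r)$, is where I expect the bulk of the technical work to lie. Cycle-freeness then follows exactly as in Theorem~\ref{t-local-uniformity}: $D$ always shifts $P$ monotonically forward, so on every label---valid or invalid---the phase marker reaches ``done'' in finitely many steps and the chain terminates at a Constant. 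Finally, $D$ and the $\ncz$ initialiser depend on $x$ only through the first application's $n/r$ hardcoded $r$-entry subtables, so they are writable in $\poly(n)$ time from $x$ and $r$, yielding Property~4.
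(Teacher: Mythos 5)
Your high-level route---consume the index $\log r$ bits per application of $D$, contracting an $n/r$-bit scratch array by a factor $r$ at each stage---is genuinely different from the paper's (which shifts a Hamming-weight-1 pointer one position per step under control of a small decrementing counter machine), but as written it has a real gap, and it is exactly the step you defer to ``the bulk of the technical work.'' With in-place reuse, a low-indexed scratch position is rewritten at essentially \emph{every} one of the $\Theta(\log n/\log r)$ stages, and its stage-$t$ update reads a \emph{different} $\log r$-bit chunk $c_t$ of the index each time. So the decision tree for the new value of $S_1$ must effectively determine the current stage among $\Theta(\log n/\log r)$ candidates. With a unary phase marker this cannot be done in depth $O(\log r)$ when $r$ is small: by an adversary argument, each query to a unary phase bit eliminates only one candidate stage, and queries to index or scratch bits do not shrink the candidate set, so depth $\Omega(\log n/\log r)$ can be forced. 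Concretely, for $r=2$ the new $S_1$ equals the $t$-th bit of $j$ routed through two scratch cells (an indirect-addressing function with unary-encoded address), which requires $\Omega(\log n)$ queries rather than the $O(1)$ the theorem demands at constant $r$. The ``window near the home stage $T(b)$'' idea fails precisely for the positions that matter: $b=1$ has home stage $\lceil\log_r n\rceil$ but is active at all stages, so no small window suffices; even a binary phase counter only brings the depth down to $\Theta(\log\log n)$, still not $O(\log r)$ for small $r$.

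The gap is fixable inside your framework, but by changing the scheme rather than the bookkeeping: write each stage's contracted array into \emph{fresh} scratch positions (the total is $\sum_{t\ge 1} n/r^t = O(n/r)$, so it still fits in $\ell = n/r + O(\log n)$ bits). Then every position has a unique home stage, its tree reads a single unary phase bit to choose between ``perform my one update'' and ``copy my own old value,'' and the update itself is $\log r$ queries to the chunk plus one data query, i.e.\ depth $O(\log r)$; the endgame emitting $\mathsf{g}_0$ or $\mathsf{g}_1$ and Properties 1--4 then go through as you describe. For comparison, the paper sidesteps the issue entirely: its chain is long (polynomially many applications of $D$, driven by a counter in the redundant-carry format of Theorem \ref{t-local-uniformity}), but each step is merely a cyclic shift of a weight-1 pattern within an $n/r$-bit segment, so no bit ever needs to know ``which stage it is'' beyond the $\log r$-bit segment index $k$ and $O(1)$ control bits.
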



\begin{proof}
We first explain the high-level idea of the construction. Assume $r=1$, we later extend this to the general case. For each $i$, the chain of labels induced by $D$ starting from the label of $\mathsf{out}_i$ will encode the following process. Initialize an $n$-bit string $s := 10 \cdots 0$ of Hamming weight 1. Then shift $s$ to the right $i-1$ times, bit-wise AND it with $x$, and finally shift it to the left $i-1$ times. Clearly, the leftmost bit of $s$ at the end of this process will equal $x_i$. The main technical difficulty is encoding counting to $i$ for arbitrary $i$ while allowing connections in $C$ to be computed locally. We achieve this using similar techniques as in the proof of Theorem \ref{t-local-uniformity} in Section \ref{s-local-uniform}, namely by performing the counting with a machine $M$ whose configurations we store in the labels of $C$. We now give the details.

The label of each gate in $C$ is parsed as a tuple $(t,s,d,i,c)$ of length $\ell = n + O(\log n)$ as follows: $t$ is a $2$-bit string specifying the type of the gate, $s$ is the $n$-bit string described above, $d$ is a $1$-bit flag specifying the direction $s$ is currently being shifted (left or right), $i$ is the $(\log n)$-bit binary representation of an index into $x$, and $c$ is the $O(\log n)$-bit configuration of a machine $M$ that operates as follows. $M$ has $\log n$ tape cells initialized to some binary number. It decrements the number on its tape, moves its head to the left-most cell and enters a special state $q^*$, and then repeats this process, terminating when its tape contains the binary representation of $1$. We encode $M$'s $O(\log n)$-bit configurations as in Theorem \ref{t-local-uniformity}, in particular using the same timestep format so that checking if $M$ has terminated can be done by reading a single bit.

The label of $\mathsf{out}_i$ has the following natural form. The type $t$ is Copy, $s$ is initialized to $10 \cdots 0$, the flag $d$ encodes ``moving right'', $i$ is the correct binary representation, and $c$ is the initial configuration of $M$ on input $i$. Note that this can be computed from $i$ in $\ncz$.

The local map $D$ simply advances the configuration $c$, and shifts $s$ in the direction specified by $d$ iff it sees the state $q^*$ in $M$'s left-most cell. If $c$ is a final configuration and $d$ specifies ``moving right'', then $D$ bit-wise ANDs $x$ to $s$, sets $d$ to ``moving left'', and returns $M$ to its initial configuration on input $i$. If $c$ is a final configuration and $d$ specifies ``moving left'', then $D$ outputs the unique label of the constant gate $\mathsf{g}_b$ where $b$ is the left-most bit of $s$. (Without loss of generality, we can take this to be the label with the correct type field and all other bits set to 0.)

The correctness of this construction is immediate. Furthermore, the strings that encode invalid labels do not induce cycles in $C$ for similar reasons as those given at the end of Theorem \ref{t-local-uniformity}. (In fact, the presence of cycles in this component would not affect the satisfiability of our final 3SAT instance, since the only gates with non-zero fan-in have type Copy.)

We now generalize the proof to any value of $r$. The goal is to establish a trade-off between the label length $\ell$ and the locality of the map $D$ such that at one extreme we have $\ell=n+O(\log n)$ and $D$ of constant locality and the the other we have $\ell=O(\log n)$ and $D$ computable by decision trees of depth $O(\log n)$.

The construction is the same as before but this time the label of a gate in $C$ is parsed as a tuple $(t, p, k, d, i, c)$ of length $\ell = n/r + \log r + O(\log n) = n/r + O(\log n)$, where $t, d, i$, and $c$ are as before and $p \in \zo^{n/r}$ and $k \in \zo^{\log r}$ together represent a binary string of length $n$ and Hamming weight 1. More precisely, consider a binary string $s\in \zo^n$  of Hamming weight 1 partitioned into $r$ segments each of $n/r$ bits. Now, the position of the bit set to 1 can be determined by a segment number $k \in \zo^{\log r}$ and a bit string $p \in \zo^{n/r}$ of Hamming weight 1 within the segment.

The map $D$ now {\em cyclically} shifts the string $p$ in
the direction indicated by $d$, updating $k$ as needed.
For the rest, the behavior of $D$ remains unchanged. In
particular, if $c$ is a final configuration and $d$
specifies ``moving right'', then $D$ bit-wise ANDs the
relevant $n/r$-bit segment of $x$ to $p$ and so on. To
perform one such step, $D$ needs to read the entire $k$
in addition to a constant number of other bits, so it can
be computed by decision trees of depth $O(\log r)$.
\end{proof}

\section{Putting it together} \label{s-proof-of-the-main-theorem}

We now put these pieces together to prove Theorem
\ref{t-explicit-reductions}.  First we modify previous
proofs to obtain the following normal form for
non-deterministic computation that is convenient for our
purposes, cf.~\S\ref{s-intro-techniques}.

\begin{theorem}\label{thm:GS}
Let $M$ be an algorithm running in time $T = T(n) \geq n$
on inputs of the form $(x,y)$ where $|x| = n$. Then there
is a function $T' = T \log^{O(1)} T$, a constant $k =
O(1)$, and $k$ logspace-uniform circuit families
$C_1,\ldots,C_k$ each of size $\log^{O(1)} T$ with oracle
access to $x$, such that the following holds:

For every $x \in \zo^n$, there exists $y$ such that
$M(x,y)$ accepts in $\leq T$ steps iff there exists a
tuple $(z_1,\ldots,z_{T'}) \in \left(\zo^{O(\log
T)}\right)^{T'}$, and $k$ permutations
$\pi_1,\ldots,\pi_k : [T'] \to [T']$ such that for all $j
\leq k$ and $i \leq T'$, $C_j\left(z_i, z_{\pi_j(i)}
\right)$ outputs $1$.
\end{theorem}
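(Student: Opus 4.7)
The plan is to cast the computation in the Gurevich--Shelah ``check-and-sort'' framework and to read off the permutations from the two natural orderings of configurations: by timestamp and by memory address. Without loss of generality take $M$ to be a random-access Turing machine with $k_0 = O(1)$ memory tapes of size $\leq T$, a constant number of $O(\log T)$-bit registers, and a finite control. A configuration $z_i$ is an $O(\log T)$-bit string encoding the state, the register contents, the head positions, the nondeterministic bit used at step $i$, the value of the cell currently under each memory head, and a timestamp field equal to $i$. I would set $T' = T \cdot \log^{O(1)} T$, padding with dummy steps after halting; the $\log^{O(1)} T$ factor absorbs a short prologue that writes $x$ into memory cell by cell, the $|y| \leq T$ nondeterministic bits, and any alignment needed so that every timestep is covered by a distinct $z_i$.

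Take $k = k_0 + 1$. Let $\pi_1(i) := i+1 \pmod{T'}$ be the cyclic successor; the circuit $C_1(z_i, z_{\pi_1(i)})$ checks that $z_{i+1}$ follows from $z_i$ under one step of $M$'s transition function assuming the memory values stored in $z_i$ are correct, and that the timestamp in $z_{i+1}$ is $i+1$. It branches on the timestamp to also verify the initial configuration (start state, empty tapes) at $i=1$, to use its oracle access to $x$ during the input-loading prologue, and to verify that a designated $z_i$ is accepting. For $j \in [k_0]$, let $\pi_{j+1}$ be the unique permutation for which $z_{\pi_{j+1}(i)}$ is the immediate successor of $z_i$ in the ordering by (tape-$j$ address read/written, timestamp); this is well-defined since timestamps are distinct. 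The circuit $C_{j+1}(z_i, z_{\pi_{j+1}(i)})$ verifies (i) that the pair is in the correct order and (ii) that if both configurations refer to the same tape-$j$ address, the value stored in $z_{\pi_{j+1}(i)}$ equals either the value held by $z_i$ (if $z_i$ did not write) or the value $z_i$ wrote. Each $C_j$ performs $O(1)$ field-level checks on $O(\log T)$ bits, so it has size $\log^{O(1)} T$ and is trivially log-space uniform.

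For completeness, an accepting run of $M$ on $(x,y)$ yields configurations that pass all checks when $\pi_1$ is the cyclic successor and each $\pi_{j+1}$ is the sort-by-memory permutation for tape $j$. For soundness, $C_1$'s check forces the $z_i$ sequence to follow $M$'s transition function modulo correctness of memory reads, while the $C_{j+1}$ checks ensure that every read returns the value of the most recent write to the same address (or the initial value $0$, since the prologue explicitly loads $x$). The main obstacle I anticipate is bookkeeping: packaging the initial, transition, prologue, and accepting checks into a single family $C_1$ that still has size $\log^{O(1)} T$, and gracefully handling the cyclic wrap-around at $i = T'$ — which I would do by reserving the final block of dummy configurations to absorb the discontinuity between ``end of run'' and ``start of run'' and by including in each $z_i$ a one-bit flag marking whether it lies in that buffer, so that $C_1$ can skip the transition check precisely there.
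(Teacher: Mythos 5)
Your proposal is correct and takes essentially the same route as the paper: timestamped $O(\log T)$-bit configurations of a random-access machine, the successor permutation $\pi_1(i) = i+1 \bmod T'$ checked by $C_1$ (which also uses its oracle access to $x$ during an initial input-handling phase and verifies acceptance), and one permutation per tape sorting by (address, timestamp) whose adjacent pairs are checked for read/write consistency by $\log^{O(1)} T$-size log-space-uniform circuits. The remaining differences (a prologue that writes $x$ versus the paper's initial pass over $x$ with unchecked, existentially quantified $y$-cells, and your explicit dummy buffer for the cyclic wrap-around) are cosmetic modeling choices that do not change the argument.
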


We note that ``oracle access to $x$'' means that the
circuits have special gates with $\log n$ input wires
that output $x_i$ on input $i \leq n$ represented in
binary. Alternatively the circuits $C_i$ do not have
oracle access to $x$ but instead there is a separate
constraint that, say, the first bit of $z_i$ equals $x_i$
for every $i \le n$.

\begin{proof}[Proof sketch]
Model $M$ as a random-access Turing machine running in
time $T'$ and using indices of $O(\log T') = O(\log T)$
bits.  All standard models of computation can be
simulated by such machines with only a polylogarithmic
factor $T'/T$ blow-up in time.  Each $z_i$ is an $O(\log
T)$-bit configuration of $M$ on some input $(x,y)$. This
configuration contains the timestamp $i \leq T'$, the
current state of $M$, the indices, and the contents of
the indexed memory locations; see \cite{ViolaNEU-ram2sat}
for details.

The circuits and permutations are used to check that
$(z_1,\ldots,z_{T'})$ encodes a valid, accepting
computation of $M(x,y)$. This is done in $k+1$ phases
where $k = O(1)$ is the number of tapes. First, we use
$C_1$ to check that each configuration $z_i$ yields
$z_{i+1}$ assuming that all bits read from memory are
correct, and to check that configuration $z_{T'}$ is
accepting. (For this we use the permutation $\pi_1(i) :=
i + 1\, \bmod T'$.) This check verifies that the state,
timestamp, and indices are updated correctly. To
facilitate the subsequent checks, we assume without loss
of generality that $M$'s first $n$ steps are a pass over
its input $x$. Therefore, $C_1$ also checks (using oracle
access to $x$) that if the timestamp $i$ is $\leq n$ then
the first index has value $i$ and the bit read from
memory is equal to $x_i$.

For $j > 1$, we use $C_j$ to verify the correctness of
the read/write operations in the $(j-1)$-th tape.  To do
this, we use the permutation $\pi_j$ such that for each
$i$, $z_i$ immediately precedes $z_{\pi_j(i)}$ in the
sequence of configurations that are sorted first by the
$(j-1)$-th index and then by timestamp. Then, $C_j$ checks
that its two configurations are correctly sorted, and
that if index $j-1$ has the same value in both then the
bit read from memory in the second is consistent with the
first.
It also checks that the value of any location that
is read for the first time is blank, except for the portion on the first tape that corresponds to the input $(x,y)$.
(Note that $C_1$ already verified that the first time $M$ reads a memory
index $i \leq n$, it contains $x_i$.  No checks is performed on the $y$ part, corresponding to this string being existentially quantified.)

We stipulate that each $C_j$ above outputs $0$ if either of its
inputs is the all-zero string, which happens if the sorting circuit does not
produce a permutation of the configurations (cf.\ Theorem
\ref{t:nondeterministic_DB_in_nc0}, part 4).
Finally, we observe that all checks can be
implemented by a log-space uniform family of
polynomial-size circuits with oracle access to $x$.
\end{proof}

We now prove our main theorem, restated for convenience.
The high-level idea is to use
\S\ref{s-local-uniform}-\ref{s:bit_fetching} to transform
the circuits from Theorem \ref{thm:GS} into circuits
whose connections are computable by small-depth decision
trees, and to then apply the textbook reduction from
Circuit-SAT to 3SAT.

\texplicitreductions*

\begin{proof}
We parse $D$'s input as a tuple $(g,r,s)$, where $g$ is
the label of a gate in some component from Theorem
\ref{thm:GS}, as explained next, $r$ is a 2-bit clause
index, and $s$ is a 1-bit control string. We specifically
parse $g$ as a pair (Region, Label) as follows.  Region
(hereafter, $R$) is an $O(1)$-bit field specifying that
Label is the label of either
\begin{enumerate}[(a)]
\item a gate in a circuit that implements the $i$th instance of some $C_j$,
\item a gate in a circuit that provides oracle access to $x$,
\item a gate in a circuit that implements some
    $\pi_j$ via a routing network, or
\item a gate providing a bit of some configuration $z_i$.
\end{enumerate}


Label (hereafter, $L$) is a $(\max(\log T, n/r) + O(\log
n) + O(\log\log T))$-bit field whose interpretation
varies based on $R$. For (a), we take $L = (i,j,\ell)$
where $i \leq T$ and $j \leq k$ specify
$C_j(z_i,z_{\pi_j(i)})$ and $\ell \in \zo^{O(\log\log
T)}$ specifies a gate within it, where we use Theorem
\ref{t-local-uniformity} and take $C_j$ to be a circuit
whose connections are computable in $\ncz$.  For (b), we
take $L$ to be a $(n/r + O(\log n))$-bit label of the
circuit from Theorem \ref{t:bit_fetching_tradeoff}. For
(c), we take $L = (j,\ell)$ where $j \leq k$ specifies
$\pi_j$ and $\ell \in \zo^{\log T + O(\log\log T)}$
specifies a gate in the circuit from Theorem
\ref{t:nondeterministic_DB_in_nc0} implementing $\pi_j$.
For (d), $L$ is simply the $(\log T + O(\log\log T))$-bit
index of the bit.

We now describe $D$'s computation. First note that from
Theorems \ref{t-local-uniformity},
\ref{t:nondeterministic_DB_in_nc0}, and
\ref{t:bit_fetching_tradeoff}, the type of $g$ can be
computed from $L$ in $\ncz$; call this value Type $\in
\{$And, Not, Copy, Input, $x$-Oracle, Constant-0,
Constant-1$\}$.

\paragraph{Computing $g$'s children.}
$D$ first computes the labels of the $\leq 2$ children of
the gate $g=(R,L)$ as follows.

If $R$ specifies that $L = (i,j,\ell)$ is the label of a
gate in $C_j(z_i,z_{\pi_j(i)})$, $D$ computes $\ell$'s
child(ren) using the $\ncz$ circuit given by Theorem
\ref{t-local-uniformity}. The only cases not handled by
this are when Type $\in \{x$-Oracle, Input$\}$. When Type
$= x$-Oracle, the child is the $i'$th output gate of the
bit-fetching circuit, where $i'$ is the lower $\log n$
bits of $i$; by part 3 of Theorem
\ref{t:bit_fetching_tradeoff}, the label of this gate can
be computed in $\ncz$. When Type = Input, the child is
either the $m$th bit of $z_i$ or the $m$th bit of
$\pi_j$'s $i$th output, for some $m \leq O(\log T)$. We
assume without loss of generality that $m$ is contained
in binary in a fixed position in $L$, and that which of
the two inputs is selected can be determined by reading a
single bit of $L$. Then, the label of the bit of $z_i$
can be computed in $\ncz$ by concatenating $i$ and $m$,
and the label of the $m$th bit of $\pi_j$'s $i$th output
can be computed by part 2 of Theorem
\ref{t:nondeterministic_DB_in_nc0}.

If $R$ specifies that $L$ is a label in the bit-fetching
circuit from Theorem \ref{t:bit_fetching_tradeoff}, $D$
computes its child using the $O(\log r)$-depth decision
trees given by that theorem.

If $R$ specifies that $L = (j,\ell)$ is the label of a
sorting circuit from Theorem
\ref{t:nondeterministic_DB_in_nc0}, $D$ computes $\ell$'s
child(ren) using the $\ncz$ circuit given by that
theorem. The only case not handled by this is when $\ell$
labels a gate in the first input to the sorting circuit,
but in this case the child is a bit of some $z_i$ where
$i$ can be computed in $\ncz$ by part 2 of Theorem
\ref{t:nondeterministic_DB_in_nc0}.

If Type = Input and $(R,L)$ is not one of the cases mentioned above or Type $\in \{$Constant-0, Constant-1$\}$, $D$ computes no children.

\paragraph{Outputting the clause.}
When the control string $s = 0$, $D$ outputs the clause specified by $g$ and $r$ in the
classical reduction to 3SAT, which we review now. (Recall that $r$ is a 2-bit clause index.) The
3SAT formula $\phi$ contains a variable for each gate $g$, including each input
gate, and the clauses are constructed as follows.

If Type = And, we denote $g$'s children by $g_a$ and $g_b$. Then depending on the
value of $r$, $D$ outputs one of the four clauses in the
formula
\[(g_a \vee g_b \vee \overline{g}) \wedge
  (g_a \vee \overline{g_b} \vee \overline{g}) \wedge
  (\overline{g_a} \vee g_b \vee \overline{g}) \wedge
  (\overline{g_a} \vee \overline{g_b} \vee g). \]
These ensure that in any satisfying assignment, $g =
g_a \wedge g_b$.

If Type = Not, we denote $g$'s child by $g_a$. Then depending on the value of $r$, $D$ outputs
one of the two clauses in the formula
\[(g \vee g_a \vee g_a)\wedge
  (\overline{g} \vee \overline{g_a}\vee \overline{g_a}).\]
These ensure that in any satisfying assignment, $g =
\overline{g_a}$.

If Type $\in \{x$-Oracle, Copy$\}$ or Type = Input and $D$ computed $g$'s child $g_a$, then depending on the value of $r$, $D$ outputs
one of the two clauses in the formula
\[(\overline{g} \vee g_a \vee g_a)\wedge
  (g \vee \overline{g_a}\vee \overline{g_a}).\]
These ensure that in any satisfying assignment, $g =
g_a$.

If Type = Constant-0, $D$ outputs
the clause $(\overline{g} \vee \overline{g} \vee
\overline{g})$ which ensures that in any satisfying
assignment $g$ is false (i.e.\ that each Constant-0
gate outputs 0). If Type = Constant-1, $D$ outputs the clause
$(g \vee g \vee g)$ which ensures that in
any satisfying assignment $g$ is true (i.e.\ that
each Constant-1 gate outputs 1).

If Type = Input, and $D$ did not compute a child of $g$, $D$ outputs a dummy clause
$(g_\mathsf{dummy} \vee g_\mathsf{dummy} \vee
g_\mathsf{dummy})$ where $g_\mathsf{dummy}$ is a
string that is distinct from all other labels $g$.

\smallskip

When the control string $s = 1$, $D$ outputs clauses
encoding the restriction that each
$C_j(z_i,z_{\pi_j(i)})$ outputs 1. Namely, $D$ parses $L
= (i,j,\ell)$ as above, and outputs $(g_{i,j} \vee
g_{i,j} \vee g_{i,j})$, where $g_{i,j} := (i,j,\ell^*)$
and $\ell^*$ is the label of $C_j$'s output gate, which
depends only on $j$ and $\log T$ and thus can be
hardwired into $D$.
\end{proof}

\section{From satisfiability to lower bounds, tightly}
\label{s-sat-2-lb}

In this section we first outline a proof of the ACC lower
bound.  We then prove Corollaries \ref{co-sat-c-lb-c},
\ref{co-CNF-SAT-lb}, and \ref{co-threshold-depth}.  For
simplicity throughout this section we focus on lower
bounds for computing functions in the class
E$^{\mathrm{NP}}$.  As remarked earlier, in a few cases
the class can be improved to NE using
\cite{ImpagliazzoKaWi01,Williams13-Improving,Williams10acc}.

\begin{theorem}[\cite{Williams10acc}] \label{t-acc}
{\em E}$^\mathrm{NP}$ does not have polynomial-size {\em ACC}
circuits.
\end{theorem}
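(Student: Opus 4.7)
The plan is to follow Williams' template \cite{Williams10acc}, substituting our Theorem \ref{t-explicit-reductions} for the black-box clause-generator used there. Assume for contradiction that E$^{\mathrm{NP}} \subseteq$ ACC. I would first invoke the Impagliazzo--Kabanets--Wigderson collapse \cite{ImpagliazzoKaWi01}, as extended to ACC by Williams, to conclude that every language $L \in$ NTIME$[2^n]$ admits polynomial-size ACC \emph{witness circuits}: for each $x \in \zo^n$ with $x \in L$ there is a circuit $W$ of size poly$(n)$ whose truth table on $O(n)$-bit indices spells out a valid accepting computation of the NTIME$[2^n]$ machine on $x$.

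Fix such an $L$ and input $x \in \zo^n$. I would then apply Theorem \ref{t-explicit-reductions} with $T = 2^n$ to obtain, in time poly$(n)$, an NC$^0$ circuit $D_x : \zo^\ell \to \zo^{3v+3}$ with $\ell = n + O(\log n)$ whose output on index $i$ is the $i$th clause of a 3SAT instance $\phi_x$ of size $T \log^{O(1)} T$, satisfiable iff $x \in L$. Next, I would non-deterministically guess an ACC circuit $W$ of size poly$(n)$ purporting to supply the bits of a satisfying assignment of $\phi_x$, and build a circuit $E$ on $\ell$ input bits by composing $D_x$ with three copies of $W$ together with an $O(1)$-size clause-evaluator, so that $E(i)=1$ iff clause $i$ is falsified by the assignment encoded by $W$. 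Finally, I would invoke Williams' ACC-satisfiability algorithm \cite{Williams10acc} on $E$, which takes time $2^\ell \cdot \mathrm{poly}(|E|) / \ell^{\omega(1)} = 2^n / n^{\omega(1)}$, and accept iff $E$ is unsatisfiable.

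Correctness holds because for $x \in L$ some guessed $W$ encodes a satisfying assignment of $\phi_x$ and thus makes $E$ unsatisfiable, while for $x \notin L$ no choice of $W$ satisfies every clause of $\phi_x$. The total non-deterministic running time is $2^n / n^{\omega(1)}$, contradicting the non-deterministic time hierarchy; thus E$^{\mathrm{NP}}$ cannot be in ACC.

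The main obstacle, and precisely where Theorem \ref{t-explicit-reductions} is essential, is ensuring that $E$ is a genuine polynomial-size ACC circuit. The NC$^0$-ness of $D_x$ is decisive: composing $D_x$ with $W$ costs only $O(1)$ additional depth and poly$(n)$ additional size, so $E \in$ ACC automatically. Previously known reductions produced $D_x$ only in P (or at best AC$^0$), which is why Williams needed either the extra assumption P $\subseteq$ ACC or a guess-and-verify workaround to fold $D_x$ into $E$ without leaving ACC; Theorem \ref{t-explicit-reductions} removes that layer and yields the more direct proof promised earlier in the paper.
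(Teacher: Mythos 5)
Your sketch is correct in substance, but it follows Williams' original IKW-based route rather than the more direct argument this paper uses. You assume E$^{\mathrm{NP}}\subseteq$ ACC, pass (via padding, since NE $\subseteq$ E$^{\mathrm{NP}}$) to NEXP $\subseteq$ ACC, and then invoke the Impagliazzo--Kabanets--Wigderson ``easy witness'' collapse as extended by Williams to get polynomial-size ACC \emph{witness circuits}, which you guess and compose with the NC$^0$ clause-generator $D_x$ from Theorem \ref{t-explicit-reductions}. The paper's proof (following \cite{Williams13-Improving}) avoids that machinery entirely: it applies the hypothesis E$^{\mathrm{NP}}\subseteq$ ACC directly to the E$^{\mathrm{NP}}$ algorithm that, on input $(x,i)$, outputs the $i$th bit of the lexicographically first satisfying assignment of $\phi_x$; the resulting circuit $C_x$ is then guessed and composed with the depth-$O(1)$ decision trees exactly as you do, and fed to the ACC-SAT algorithm. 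What your route buys is strength --- the witness-circuit collapse is what lets one conclude NE (indeed NEXP) $\not\subseteq$ ACC, which is Williams' actual theorem and is noted in the paper as achievable for class (4) --- at the cost of the nontrivial ACC-witness lemma; the paper's route is more elementary and self-contained, which is precisely why this section states its lower bounds for E$^{\mathrm{NP}}$, and its careful depth accounting ($d+1$, then $d+3$, by absorbing the mutually exclusive decision-tree paths into an AND layer) is what powers the tight Corollaries \ref{co-sat-c-lb-c}--\ref{co-threshold-depth}, whereas your ``$O(1)$ extra depth'' suffices only for the polynomial-size statement. Two small points to tighten in your version: the witness circuits must encode satisfying assignments of $\phi_x$ (apply the collapse to the verifier that guesses an assignment of $\phi_x$ and checks all clauses, not just to the original machine's witness $y$), and you should state explicitly that the padding step is what licenses invoking the NEXP-level collapse from the E$^{\mathrm{NP}}$ hypothesis.
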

\begin{proof}[Proof sketch]
Following \cite{Williams13-Improving}, suppose that
E$^\mathrm{NP}$ has ACC circuits of size $n^c$ and depth
$d$ for some constants $c$ and $d$. Let $L \in
\mathrm{NTime}(2^n) \setminus \mathrm{NTime}(o(2^n))$
\cite{Cook73,SeiferasFM78,Zak83}. Consider the
E$^\mathrm{NP}$ algorithm that on input $x \in \zo^n$ and
$i \le 2^n \poly(n)$ computes the 3CNF $\phi_x$ from
Theorem \ref{t-explicit-reductions}, computes its first
satisfying assignment if one exists, and outputs its
$i$th bit.  By assumption this algorithm can be computed
by an ACC circuit of depth $d$ and size $n^c$.  By
hardwiring $x$ we obtain that for every $x \in \zo^n$
there is a circuit $C_x$ of the same depth and size that
on input $i$ computes that $i$th bit.

We contradict the assumption on $L$ by showing how to
decide it in $\mathrm{Ntime}(o(2^n))$.  Consider the
algorithm that on input $x \in \zo^n$ guesses the above
circuit $C_x$.  Then it connects three copies of $C_x$ to
the decision trees with depth $O(1)$ from Theorem
\ref{t-explicit-reductions} that on input $j \in \zo^{n +
O(\log n)}$ compute the $j$th clause of $\phi_x$, to
obtain circuit $C'_x$.  Since the paths in a decision
tree are mutually exclusive, $C'_x$ may be obtained
simply by appending an $n^{O(1)}$-size layer of And gates
to a layer of the gates of $C_x$, and increasing the
fan-in of the latter, for a total depth of $d+1$.  Then
the algorithm constructs the circuit $C''_x$ which in
addition checks if the outputs of the 3 copies of $C_x$
do not satisfy the $j$th clause.  The circuit $C''_x$ is
ACC, and a naive implementation yields size $n^{O(c)}$
and depth $d+3$.  Running the satisfiability algorithm in
\cite{Williams10acc} on $C''_x$ determines if $\phi_x$ is
satisfiable, and hence if $x \in L$, in time
$2^{|j|}/|j|^{\omega(1)} = 2^{n+O(\log n)}/(n+O(\log
n))^{\omega(1)} = o(2^n)$.
\end{proof}

\begin{proof}[Proof of Corollary \ref{co-sat-c-lb-c}]
We reason as in the proof of Theorem \ref{t-acc}. For
(1)-(3) it is immediate to see that the circuit $C''_x$
is in the same class as $C_x$.  For (4) recall from
\cite{BeT94} that quasipolynomial-size SYM-AND circuits
are equivalent to functions of the form
$f(p(x_1,\ldots,x_n))$, where $p$ is a polylog-degree
polynomial with integer coefficients of quasipolynomial
magnitude, and $f$ is any boolean function.  Our
reduction can be implemented by polynomials of constant
degree, which can be absorbed in $p$.  There remains to
negate the outputs depending on the negation bits, and to
take an Or.  This can be viewed as having $O(1)$
functions of the form $f_i(p_i(x_1,\ldots,x_n))$ and
wanting some arbitrary function of their joint output.
Let $M$ be the smallest number which is bigger than the
absolute value attained by any $p_i$.  $M$ can be taken
to be quasipolynomial.  Now define the polynomial $p :=
\sum_{i} M^i p_i$.  The output $p$ encodes the outputs of
the $p_i$, and so we conclude by applying an appropriate
function $f$ to $p$.
\end{proof}

\begin{proof}[Proof of Corollary \ref{co-CNF-SAT-lb}]
(1') is just the Cook-Levin theorem.

(3') follows from Valiant's depth reduction
\cite{Val77}, an exposition of which is given in \cite{viola-FTTCS09}.
It shows how to convert any linear-size log-depth circuit
into an Or of $2^{O(n/\log \log n)}$ $k$CNFs for $k = n^\e$
for any fixed $\e$.  (This transformation requires
selecting $O(n/ \log \log n)$ wires to be removed from the
circuit so that the depth is reduced.  It can be verified
that these wires can be computed efficiently so that the
whole transformation can be carried out in polynomial
time in the output circuit; in fact an efficient
co-nondeterministic procedure suffices for the argument,
so one could also co-nondeterministically guess the wires
and verify that the depth is reduced.)  Thus, if we can
determine the satisfiability of any such $k$CNF in time
$2^{n - \omega(n/\log \log n)}$, we can determine the
satisfiability of the original circuit in time
$2^{O(n/\log \log n) + n - \omega(n/\log \log n)} = 2^{n -
\omega(n/\log \log n)} = 2^n/n^{\omega(1)}$.

(2') follows from Valiant's depth reduction for series-parallel circuits \cite{Val77}, some missing details of which are provided by Calabro \cite{Calabro08} (see remarks following this proof). It shows how to convert any linear-size series-parallel circuit into an Or of $2^{n/d}$ $k$CNFs for $k = 2^{2^{O(d)}}$, for any parameter $d$ (cf. \cite[Thm.\ 3.9]{CyganDLMNOPSW-onpro}). Thus similarly to (3'), if we can determine the satisfiability of any such $k$CNF in time $2^{\eps n}$ for some fixed $\eps < 1$, then choosing $d = 2/(1-\eps) = O(1)$ we can determine the satisfiability of the original circuit in time $2^{n/d} \cdot 2^{\eps n} = 2^{(1+\eps)n/2} = 2^n/n^{\omega(1)}$.
\end{proof}

\paragraph{Remarks on series-parallel circuits.}  The
definition we work with is given by Calabro  building on
Valiant's seminal work \cite{Val77}, see the definition
of ``VSP'' in \cite{Calabro08}.  It has been noted
several times that such circuits of linear size may be
transformed into subexponential Or-And-Or circuits with
bounded bottom fan-in, see
e.g.~\cite{CyganDLMNOPSW-onpro}. However, a proof seems
missing from the literature.  It is not immediate to
obtain this proof from Theorem 8 in \cite{Calabro08},
because the statement of that theorem talks about long
input-output paths, whereas for the transformation one
needs to forbid any long path.  With the latter the proof
is completed reasoning as in \cite{Val77,viola-FTTCS09}.
Nevertheless, with essentially the same proof as that of
Theorem 8 in \cite{Calabro08} one can forbid any long
path.

We also remark that several papers add the restriction
that the series-parallel circuit be of logarithmic depth.
This restriction is unnecessary.

%

\begin{proof}[Proof of Corollary \ref{co-threshold-depth}]
We reason as in the proof of Theorem \ref{t-acc}, and
improve the depth of $C''_x$ to $d+2$ as follows.  The
algorithm will guess instead of $C_x$ a circuit $D_x$
that given $i$ and a bit $b$ computes the $i$th bit
mentioned above xor'ed with $b$.  In an additional
preliminary stage, the algorithm will check the
consistency of $D$ by running the satisfiability
algorithm on $(D_x(i,0) \wedge D_x(i,1))$ and on
$(\neg D_x(i,0)) \wedge \neg D_x(i,1))$, and reject if
the output is ever 1.  This circuit can be implemented in
depth $d+1$.
\end{proof}

\begin{proof}[Proof sketch of Corollary \ref{co-3sat-with-3col}]
Fix any language $L \in \mathrm{NTime}(2^n) \setminus \mathrm{NTime}(o(2^n))$. Theorem 2.2 of \cite{Ben-SassonV-AC0PCP} shows the following: given $x \in \zo^n$, one can produce in time $\poly(n)$ a map $D$ such that $D(i)$ outputs the $i$-th clause in a 3CNF of size $\widetilde{O}(2^n)$ that is satisfiable iff $x \in L$, and further $D$ is computable with locality 1 after fixing some $O(\log n)$ bits of its input.

Here, we compose this with the textbook reduction from 3SAT to 3Color. Recall that this reduction has a pair of nodes for each variable and a constant-size gadget for each clause. Moreover, there are no edges between clause gadgets (see remarks following this proof). Thus we can instead produce a map $D'$ where $D'(i)$ outputs the $i$-th edge in a graph of size $\widetilde{O}(2^n)$ that is 3-colorable iff $x \in L$, and $D'$ is computable with locality 1 after fixing some $O(\log n)$ bits of its input. This is done as follows. We parse a typical input to $D'$ as an index to a clause (i.e.\ an input to $D$) plus $O(1)$ control bits specifying which edge in the clause gadget to output. We include the control bits in the $O(\log n)$ bits to be fixed. Once these are fixed, the labels of the two vertices output by $D'$ are just projections of the control bits, the clause index, and the variables in the clause. The latter are output by $D$, so we can modify $D$ to obtain $D'$ that is 1-local after $O(\log n)$ bits are fixed.

With this modification to the proofs of lower bounds from satisifiability algorithms (e.g.\ Theorem \ref{t-acc} above), we now need only two (instead of three) copies of the circuit implementing the E$^{NP}$ search algorithm since we have a 2CSP instance (3Color) rather than a 3CSP instance (3SAT). On the other hand, the values are no longer Boolean; in the case of 3Color, three values are possible which we encode with two bits.
Now in order to derive a lower bound against circuits with $3n$ non-input gates, one needs to decide in time $2^n/n^{\omega(1)}$ the satisfiability of a circuit with $2 \cdot 3n$ non-input gates, $n$ input gates, and $O(1)$ other gates. The Cook-Levin theorem reduces this to a 3SAT instance on $7n + O(1)$ variables, so it suffices to solve 3SAT in deterministic time $c^n$ for  $c<2^{1/7}$. The remainder of the proof follows \cite[Thm.\ 1.5]{Ben-SassonV-AC0PCP}.
\end{proof}

In the above proof, a crucial feature of the reduction from 3SAT (or CircuitSAT) to 3Color is that there are no edges between gadgets corresponding to different clauses (or gates). Consider instead the standard reduction from 3SAT to Independent Set, which likewise produces a constant-size gadget for each clause but which also contains edges between gadgets whose clauses have conflicting literals (i.e.\ $u$ and $\overline{u}$). In this case it is not clear how to construct an $\ncz$ map $D$ that outputs edges in the graph. This is because outputting the inter-clause edges seems to require computing, from the label of a gate $g$ in the CircuitSAT instance, the labels of all gates that share a common input with $g$ (such gates induce clauses with conflicting literals, cf.\ proof of Theorem \ref{t-explicit-reductions}).

In light of this it is interesting to compare our results with \cite{AroraSW09}. There the authors define a $\mathcal{C}$-graph as one for which there exists a map $D \in \mathcal{C}$ such that $D(i,j)$ outputs the $(i,j)$-th bit of the graph's adjacency matrix. They show that if NEXP $\neq$ EXP then none of the standard NP-complete graph problems (including 3Color and Independent Set) can be solved on $\acz$-graphs in time polynomial in the size of the graph \cite[Thm.\ 5]{AroraSW09}. One can instead consider a definition that corresponds to our construction above: $G$ is a $\mathcal{C}$-graph if there exists a map $D \in \mathcal{C}$ such that $\{D(i)\, |\, i\}$ is the edge set of $G$. Using this latter definition, our results imply that if NEXP $\neq$ EXP then 3Color cannot be solved on $\ncz$-graphs in polynomial time. However as just discussed, it is not clear how to prove such a result for e.g.\ Independent Set.

%
%

\paragraph{Acknowledgments.}  We are very grateful to Eli
Ben-Sasson for a discussion on routing networks which led
us to improving our main result,
cf.~\S\ref{s-intro-techniques}.  We also thank Ryan
Williams for feedback on the write-up.

\bibliographystyle{alpha}
\small{ \ifnum\EmanueleViolaDir=1
\bibliography{../OmniBib}
\else
\bibliography{OmniBib}
\fi }

\end{document}

\normalsize

\appendix

\section{Edge-disjoint routing networks} \label{s:routingproof}
Recall the following definitions from \S \ref{s:routing-networks}.

\drrnetwork*
\ddebruijn*

We now restate and prove Theorem \ref{t:dbs_in_tandem}, following \cite[Appendix A]{Ben-SassonCGT12Fast} who prove it for node-disjoint paths.

\tdbsintandem*

This theorem is proved by showing that $DB^4_n$ routes every permutation that can be routed on a Bene\v{s} network, defined next. This is then combined with the classical result that any permutation can be routed on a Bene\v{s} network. As an intermediate step, the proof shows how to route any permutation on a sequence of butterfly networks.

\begin{definition}
The {\em $n$-dimensional butterfly network} $B_n$ is a directed layered graph with $n+1$ columns and $2^n$ rows. Each node is labeled by $(w,i)$ in the same way as in $DB_n$. For $i < n$, node $(w,i)$ has outgoing edges to $(w,i+1)$ and $(w \oplus e_{i+1}, i+1)$, where $e_i \in \zo^n$ has a $1$ in the $i$th position and $0$s everywhere else.

The {\em $n$-dimensional reversed butterfly network} $RB_n$ is the same as $B_n$, except that node $(w,i)$ has outgoing edges to $(w,i+1)$ and $(w \oplus e_{n-i}, i+1)$.

The {\em $k$-tandem $n$-dimensional butterfly network} $B^k_n$ is a sequence of $k$ $n$-dimensional butterfly networks connected in tandem.

The {\em $n$-dimensional Bene\v{s} network} $BN_n$ is the graph obtained by connecting $B_n$ and $RB_n$ in tandem.
\end{definition}

A proof of the following theorem can be found in \cite[Thm.\ 3.10]{leightonbook}.

\begin{theorem}
For every $n$, $BN_n$ is a routing network.
\end{theorem}

In combination with the next two lemmas which show that every permutation that can be routed on $BN_n$ can be routed on $DB^4_n$, this proves Theorem \ref{t:dbs_in_tandem}.

\begin{lemma}
For every $n,k$, there is an isomorphism from $B^k_n$ to $DB^k_n$.
\end{lemma}
\begin{proof}
\cite[Claim A.5]{Ben-SassonCGT12Fast} shows that the map
$$ \psi(w,i) := (\mathsf{sr}^{i-1}(w^R), i) $$
is an isomorphism from $B_n$ to $DB_n$, where $w^R$ is the reverse of $w$ and $\mathsf{sr}^{i}$ denotes cyclically shifting right $i$ times. Because this isomorphism preserves the order of the first column with respect to the last column (i.e.\ $\psi(w,0) = (w',0) \Leftrightarrow \psi(w,n) = (w', n)$), it extends to an isomorphism on any number of the graphs connected in tandem.
\end{proof}

\begin{lemma}
For every permutation $\pi : \zo^{n+1} \to \zo^{n+1}$, if $BN_n$ routes $\pi$ then $B^4_n$ routes $\pi$.
\end{lemma}
\begin{proof}
Fix a permutation $\pi$. Let $\mathsf{br}(w) := w^R$ denote the bit-reversal permutation. We will show how to route $\pi \circ \mathsf{br}$ on $B^4_n$ given a routing of $\pi$ on $BN_n$, which implies the lemma because $BN_n$ can also route $\pi \circ \mathsf{br}^{-1}$.

For each $w \in \zo^{n+1}$ let $p_w$ be the path that connects endpoint $w$ in the first column of $BN_n$ with endpoint $\pi(w)$ in the last column. We view each $p_w$ as the concatenation of two paths $p_{w,1}$ and $p_{w,2}$ on $B_n$ and $RB_n$ respectively. The set $\{p_{w,1}\}_w$ is already a set of $2^{n+1}$ edge-disjoint paths on the butterfly network $B_n$. We now show how to construct a set of $2^{n+1}$ edge-disjoint paths $\{p'_w\}_w$ on $B^3_n$ such that for all $w$, the endpoint of $p'_w$ is the reverse of the endpoint of $p_{w,2}$. This will complete the proof.

To do this, we use two pieces. First, the map $\psi(w,i) := (w^R, i)$ is an isomorphism from $RB_n$ to $B_n$ \cite[Claim A.3]{Ben-SassonCGT12Fast}. Second, the bit-reversal permutation $\mathsf{br}$ can be routed on $B^2_n$; this requires a simple modification of \cite[Claim A.8]{Ben-SassonCGT12Fast} to obtain edge-disjoint routing. Now let $p_{w,2}$ be one of the paths on $RB_n$ above, and let $w'$ be its endpoint. Then use the first $2n+1$ columns of $B^3_n$ to route $w$ to $w^R$, and use the last columns to route $w^R$ to $(w')^R$ using the path induced by $\psi$ and $p_{w,2}$.
\end{proof}

We now give the proof of Theorem
\ref{t-explicit-reductions} assuming the following
theorem, which combines the constructions in
\S\ref{s-local-uniform}-\ref{s:bit_fetching} and is proved afterwards.

\begin{theorem} \label{t-in-place-reduction}
Let $M$ be a non-deterministic algorithm running in time
$T = T(n) \geq n$.  For every $x \in \zo^n$ and every $r \leq n$,
there exists a circuit $C : \zo^{T \cdot \log^{O(1)} T} \to \zo^m$ with
$m=O(T)$ outputs and of size $2^\ell$ for $\ell = \max(\log T, n/r) + O(\log n) + O(\log\log T)$ such that:
\begin{enumerate}
\item There exists $y\in \zo^T$ such that
    $M(x,y)$ accepts in $\leq T$ steps if and only if
    $\exists y' \in \zo^{T \cdot
\log^{O(1)} T}$ such that $C(y')_i = 1$ for all $i \in [m]$.

\item There is a map $D$ computable by a forest of depth-$O(\log r)$ decision trees
 	such that, on input $g \in \zo^{\log |C|}$
    labelling a gate in $C$ and $v \in
    \zo$ specifying a child of $g$, $D$ outputs the
    label of $g$'s child.

\item There is a Turing machine that runs in time $\poly(n,\log T)$ on input $x$, $n$, and $T$ and outputs a description of $D$.
\end{enumerate}
\end{theorem}

We will make use of the following properties of the label
format and label-processing circuit $D$ from Theorem
\ref{t-in-place-reduction}, which are apparent from the
proof. First, each label of a gate in $C$ explicitly
    specifies the type of the gate in an $O(1)$-bit
    field whose location is fixed. Its possible values are
    $\{$And, Not, Copy, Input, 0, 1$\}$.
Second, there is an $\ncz$ circuit that, given $j \in \zo^{\log m}$,
outputs the label of $C$'s $j$th output gate.

\begin{proof}[Proof of Theorem \ref{t-explicit-reductions}]
Let $C : \zo^{T \cdot \log^{O(1)} T} \to \zo^m$ be the circuit
from Theorem \ref{t-in-place-reduction}
whose connections are computed by a forest $D$ of depth-$O(\log r)$ decision trees.
We construct a new depth-$O(\log r)$ forest $F$ that outputs
clauses of a 3SAT formula $\phi$ that encodes the satisfiability
of $C$.

$F$ parses its $(\log|C|+3)$-bit input as
$(g, i, k)$, where $g$ is a label of a
gate in $C$, $i$ is a $2$-bit clause index, and $k$ is a
$1$-bit control string. Note that thus
$\log|\phi| = \log |C| + 3 = \max(\log T, n/r) + O(\log n) + O(\log\log T)$
as stated.

When the control string $k = 0$, $F$ uses $D$ to
compute the clause specified by $g$ and $i$ in the
classical reduction to 3SAT, which we review now. There
is a variable for each gate in $C$, including each input
gate, and the clauses are constructed as follows. (Recall
that the type of a gate can be determined by reading
$O(1)$ bits of its label.)

If the gate $g$ is an And gate, $F$ first uses $D$ to
compute the labels of $g$'s two children, which we
denote $g_a$ and $g_b$, and then depending on the
value of $i$ outputs one of the four clauses in the
formula
\[(g_a \vee g_b \vee \overline{g}) \wedge
  (g_a \vee \overline{g_b} \vee \overline{g}) \wedge
  (\overline{g_a} \vee g_b \vee \overline{g}) \wedge
  (\overline{g_a} \vee \overline{g_b} \vee g). \]
These ensure that in any satisfying assignment, $g =
g_a \wedge g_b$.

If the gate $g$ is a Not gate, $F$ uses $D$ to
compute the label of $g$'s child, which we denote
$g_a$, and then depending on the value of $i$ outputs
one of the two clauses in the formula
\[(g \vee g_a \vee g_a)\wedge
  (\overline{g} \vee \overline{g_a}\vee \overline{g_a}).\]
These ensure that in any satisfying assignment, $g =
\overline{g_a}$.

If the gate $g$ is a Copy gate, $F$ uses $D$ to
compute the label of $g$'s child, which we denote
$g_a$, and then depending on the value of $i$ outputs
one of the two clauses in the formula
\[(\overline{g} \vee g_a \vee g_a)\wedge
  (g \vee \overline{g_a}\vee \overline{g_a}).\]
These ensure that in any satisfying assignment, $g =
g_a$.

If the gate $g$ is a constant-0 gate, $F$ outputs
the clause $(\overline{g} \vee \overline{g} \vee
\overline{g})$ which ensures that in any satisfying
assignment $g$ is false (i.e.\ that each constant-0
gate outputs 0). If $g$ is a constant-1 gate, $F$ outputs the clause
$(g \vee g \vee g)$ which ensures that in
any satisfying assignment $g$ is true (i.e.\ that
each constant-1 gate outputs 1).

If the gate $g$ is an Input gate, $F$ outputs a dummy clause
$(g_\mathsf{dummy} \vee g_\mathsf{dummy} \vee
g_\mathsf{dummy})$ where $g_\mathsf{dummy}$ is a
string that is distinct from all labels in $C$.

When the control string $k = 1$, $F$ outputs clauses encoding
the restriction that each of $C$'s output gates outputs 1.
Namely, $F$ outputs the clause
$(\mathsf{out}_j \vee \mathsf{out}_j \vee \mathsf{out}_j)$ where
$j \leq m$ is the index given by the first $\log m$ bits of $g$,
and $\mathsf{out}_j$ is the label of $C$'s $j$th output gate.
(Recall that $\mathsf{out}_j$ can be computed from $j$ in $\ncz$.) Note that
for each output gate, there is some input (in fact several) to $F$
that forces its value to be 1.
\end{proof}

\subsection{Proof of Theorem \ref{t-in-place-reduction}}
We now prove Theorem
\ref{t-in-place-reduction}. The standard Turing machine
model we use for non-deterministic time is the following.
Most other machines, including RAM, may be simulated in
this model with only a polylogarithmic slow-down in time.
On the other hand Turing machines are easier to describe
and arguably cleaner.

\begin{definition} \label{d-RTM} A {\em random-access
Turing machine} (RTM) $M$ is a Turing machine with a
constant number of tapes. These tapes are organized into
$k$ pairs. The $i$th pair, for $i \in \{1, \ldots, k\}$,
consists of one RAM tape $Ram_i$ and one register tape
$Reg_i$. Each tape has its own head. $M$'s transition
function $\delta$ inputs the $2k$ symbols scanned and the
state, and it outputs a new state, $2k$ new characters to
be written on the tapes, and $k$ head movements to
adjacent cells for the register tapes only. $M$ has $k$
special, disjoint sets of \emph{jump states}, denoted
$J_1,\ldots,J_k$. Whenever the machine leaves a state in
$J_i$, the contents of $Ram_i$ remain unchanged, while
the following happens in a single step:

-- (1) The head of $Ram_i$ is moved to the cell whose
    address is on $Reg_i$,

-- (2) the contents of $Reg_i$ are erased, and

-- (3) the head of $Reg_i$ is moved to the beginning
    of the tape.

The machine starts with the input on the tape $Ram_1$.
\end{definition}

We choose the terminology ``register tape'' because these
tapes indeed hold registers when simulating RAMs.

\begin{proof}[Proof of Theorem
\ref{t-in-place-reduction}]
Appendix \ref{s-ram2sat} shows how to construct, given $M$, $T = 2^t$, and $x$, a circuit $C : \zo^{T'} \to \zo^m$ with $m=O(T)$ outputs and
of size $T \cdot \log^{O(1)} T$ (in particular with
input length $T' := T \cdot \log^{O(1)} T$) such that
there exists $y \in \zo^T$ such that $M(x,y)$ accepts if
and only if there exists $y' \in \zo^{T'}$ such that
$C(x)_i = 1$ for all $i\in [m]$. Here we describe the label format for gates in $C$, and describe the process of computing the child labels of a given gate. For simplicity we consider an RTM $M$ that has only a single Ram/Reg tape pair. In Figure \ref{fig:sort-and-check}, this corresponds to the portion of the circuit below ``sort by $Ram_2$ head position.'' The extension to general RTMs is straightforward.

\paragraph{The label format.}
We define the label of a gate $g$ in $C$ as a tuple $($Region, Type, Descriptor$)$ as follows.

\begin{itemize}
\item {\bf Region}: this specifies in which region of the circuit $g$ lies. The possible values are as follows.
\begin{itemize}
\item {\em bottom}: $C$'s input gates.
\item {\em fetching}: the subcircuit in $C$ that fetches the bits of $x$.
\item {\em initial check}: the initial check of configurations, i.e.\ the check of states, head positions, and bounded-register tapes.
\item {\em sorting}: the portion of $C$ that sorts the configurations by $Ram_1$'s head position.
\item {\em sorted check}: the check of configurations after sorting, i.e.\ the check of $Ram_1$'s contents and that the configurations are properly sorted.
\end{itemize}

\item {\bf Type}: specifies $g$'s type with possible values in $\{\mathsf{and, not, copy, 0, 1, input}\}$.

\item {\bf Descriptor}: this specifies $g$'s location within its region; the content varies based on the values of the two previous fields, as follows.
\begin{itemize}
\item {\em Region $=$ bottom}: $i \leq T'$, specifying input gate $y_i$.
\item {\em Region $=$ fetching}: the label of a gate in the subcircuit given by Theorem \ref{t:bit_fetching_tradeoff} that fetches bits of $x$.

\item {\em Region $=$ initial check}: $i \in \{1,\ldots,T-1\}$ specifying that $g$ is in the subcircuit comparing configurations $i$ and $i+1$, and $\ell \in \zo^{O(\log t)}$ specifying a gate in this ($t^{O(1)}$-size) subcircuit.

\item {\em Region $=$ sorting}: the label of a gate in the subcircuit given by Theorem \ref{t:nondeterministic_DB_in_nc0} that non-deterministically sorts
the configurations.

\item {\em Region $=$ sorted check}: analogous to the case {\em Region $=$ initial check}.

\end{itemize}
\end{itemize}

The Region and Type fields are each specified with $3$ bits.  The length of the Descriptor field is the maximum over the lengths of the labels described there.  For all except the fetching subcircuit this length is $\log T + O(\log\log T)$, and for the fetching subcircuit the length is $n/r + O(\log n)$ where $r \leq n$ is the parameter in the statement of this theorem. In total we can take each label in $C$ to have length $\ell = \max(\log T, n/r) + O(\log n) + O(\log\log T)$, and thus $C$ has size $2^\ell$.

\paragraph{The label-computing algorithm.}
Now we describe the forest $D$ of depth-$O(\log r)$ decision trees that computes connections in $C$.
Recall that $D$ is given a label $g$ parsed as above, and a bit $v$ that selects which of $g$'s $\leq 2$ children it should output. One technicality is that the labels of gates in each of the subcircuits constructed in previous sections (Theorems \ref{t-local-uniformity}, \ref{t:nondeterministic_DB_in_nc0}, and \ref{t:bit_fetching_tradeoff}) already specify the type of the gate. We adopt the convention that if this type does not match the Type field, then $D$ outputs the label of some arbitrary fixed gate with type constant-0.

$D$ first reads the Region and Type fields.
If Region = bottom or Type specifies a type with fan-in 0, then $D$ simply outputs $g$. Otherwise, $D$ computes as follows.

{\bf Region = fetching.} Let $\ell$ be the label (specified by the Descriptor field) of a gate in the fetching subcircuit. $D$ computes $\ell$'s child using the depth-$O(\log r)$ decision trees given by Theorem \ref{t:bit_fetching_tradeoff}.

{\bf Region = initial check.} Let $i$ and $\ell$ be as
described above. It is easy to see that the subcircuits in
this region are log-space uniform. Consequently, we
apply Theorem \ref{t-local-uniformity} to get an equivalent
circuit for which there exists an $\ncz$ circuit computing
the connections.

The only non-trivial case is when $\ell$ specifies
a gate at the input of the configuration-checking
subcircuit, whose children either have Region = bottom
or Region = fetching.
More precisely, the child is either a gate providing one
bit of $C$'s input $y' \in \zo^{T'}$, or else it is a
gate providing one bit of $x \in \zo^n$.

In the former case, assume wlog that the configuration
size is a power of 2 and that the input gates of the
configuration-checking subcircuit are labeled $\ell = 1,
2, \ldots$. Then the index of the corresponding bit of
$y'$ is given by the concatenation of $i$ and $\ell$, and
so computing the child essentially just involves setting
Region = bottom and Type = input. In the latter case the
bit of $x$ is given by $i$, and
we take advantage of item 3 in Theorem
\ref{t:bit_fetching_tradeoff} specifying that the label of
the $i$th output gate of the fetching subcircuit can be
computed in NC$^0$ from the binary representation of $i$.

{\bf Region = sorting.} By Theorem \ref{t:nondeterministic_DB_in_nc0}, computing connections in this region can be done in $\ncz$. If the label is an input gate of the sorting subcircuit, then its child is a bit of one of the configurations (i.e.\ either a bit of $x$ or a bit of $y'$) and can be computed analogously to the previous region.

{\bf Region = sorted check.} This case is analogous to the case when Region = initial check, with the exception that the inputs to this region come from Region = sorting. In this case, we note that for $j \in \{1, \ldots, T-1\}$ the configurations at indices $j$ and $j+1$ are output by two adjacent switches in the last row of the routing network. Now, suppose $i$ and $\ell$ are as before. Let $b\in \zo^{\log\log T}$ denote the index of the bit within a configuration. In this case, we need to compute the label of the gate in the sorting network outputting the $b$-th bit in the $i$-th configuration, which can be done in $\ncz$ by item 2 of Theorem \ref{t:nondeterministic_DB_in_nc0}.
\end{proof}

\section{From RAM to Circuit-SAT}\label{s-ram2sat}

Here we give the details of constructing the circuit depicted in Figure \ref{fig:sort-and-check}, whose label format is described in Theorem \ref{t-in-place-reduction}.

\subsection{From bounded-address RTMs to Circuit-SAT}
\label{s-baRAM2SAT}

In this section we construct the circuit under the
assumption that the machine is $O(\log
T)$-bounded-address, as defined next.

\begin{definition}
\label{bounded-rtm-def} Let $B(n)$ be a function. An RTM
is \emph{$B(n)$-bounded-address} if for every $x$ of
length $n$, and for every $y$, the computation of $M$ on
input $(x,y)$ satisfies the following for each $i \leq
k$: either the length of $Reg_i$ is always bounded by
$B(n)$, or else the machine never enters a jump state in
$J_i$.
\end{definition}

\begin{theorem}
\label{unbounded-to-bounded-rtm-thm} Let $T = T(n) \geq
n$ be a function computable in time $T^{O(1)}$ given any
string of length $n$. Let $M$ be an $O(\log
T)$-bounded-address RTM. There is a function
$T' = T'(n) = O(T \log T)$ such that the following holds.

Given an input $x$ of length $n$ one can construct a
Boolean circuit $C$ of size $T\cdot \log^{O(1)} T$, with
$m=O(T)$ outputs $s_1, ..., s_m$, in time $|C|
\cdot T^{O(1)}$ that satisfies the following: there
exists $y \in \zo^{T}$ such that $M$ accepts $(x,y)$ in
$\le T$ steps if and only if there exists $y' \in
\zo^{T'}$ such that $C(y')_{s_ i}= 1$ for all $i\in [m]$.
\end{theorem}
\begin{proof}
We first describe $C$ as an algorithm, and then observe
that it can be implemented by a circuit of the specified
size.
%
Recalling Definition \ref{bounded-rtm-def}, we refer to
tapes $Reg_i$ which have length always bounded by $O(\log
T)$ as {\em bounded register tapes}. Note that $M$ has
two other types of tapes, RAM tapes and unbounded
register tapes; we are going to treat the last two types
in a similar fashion.

We define a \emph{configuration} of $M$ to contain the
following items:

%
\begin{itemize}
\item $M$'s current timestep ($O(\log T)$ bits),
\item $M$'s current state ($O(1)$ bits),
\item the entire contents of each bounded register
    tape including head position ($O(\log T)$
    bits, using the fact that $M$ is
    bounded-address),
\item the head position and the content of the
    indexed cell of RAM and unbounded register
    tapes ($O(\log T)$ bits, using the fact that $M$
    is bounded-address and runs in time $T$).
\end{itemize}

Summing the bounds above, we see that the size of a
configuration is $O(\log T)$.

%

We now describe the algorithm, see Figure
\ref{fig:sort-and-check} for a schematic representation.
The input $y' \in \zo^{O(T \log T)}$ to the circuit is
parsed in the following way. The first $T$ bits contain
$y$ (the second input for $M$) and the rest contains $T$
configurations $c_1,\ldots,c_T$ for the computation of
$M(x,y)$. We check that these configurations encode an
accepting computation of $M$, in two phases. In the first
phase (comprising points 1, 2, and 3 below), we check the
validity of states and head positions for all tapes,
and we verify the consistency of bounded register tapes,
assuming what is read on RAM and unbounded register
tapes is correct. This is a simple comparison of adjacent
configurations. In the second phase (Point 4 below), we
check the consistency of read/write operations on RAM
and unbounded register tapes, one tape at a time. For
each tape, we sort the configurations by the head
position on that tape and within each head position by
timestep. Then we can check the consistency of read/write
operations again by a simple comparison of adjacent
configurations.

To emulate the fact that $M$ starts with the string
$(x,y)$ on $Ram_1$, we create a set of ``dummy''
configurations $c_{-L},\ldots,c_{-1}$ for $L := |(x,y)| =
O(n + T) = O(T)$. In $c_{-i}$, the head position of
$Ram_1$ is on the $i$th cell which contains the $i$th bit
of $(x,y)$, the timestep is 0, and all other fields are
set to 0. These configurations are added to
$c_1,\ldots,c_T$ when performing the second phase for
$Ram_1$. (The dummy configurations are numbered from $-L$
to $-1$ because conceptually they correspond to an
initial walk through the input $(x,y)$.)


We now give the details of the algorithm.
\begin{enumerate}
\item Check that $c_1$ is the initial configuration
    of $M$.

\item (States and heads) For each $t = 1,\ldots,T-1$,
    check that the state and $2k$ head positions in
    $c_{t+1}$ are correct given $c_{t}$. Note that
    when $c_{t}$ contains a state from a jump set
    $J_i$, this check verifies that $Ram_i$'s head
    position in $c_{t+1}$ matches the contents of
    $Reg_i$ in $c_t$, and that the head on $Reg_i$ is
    at the beginning.

\item (Contents of bounded-register tapes) For each
    $t = 1,\ldots,T-1$, check that the contents of
    bounded register tapes in $c_{t+1}$ are correct
    given those in $c_t$. Note that when $c_{t}$
    contains a state from a jump set $J_i$, this
    check verifies that $Reg_i$'s tape in $c_{t+1}$
    is all blank.

\item (Contents of RAM and unbounded register tapes)
    For each tape $\tau$ that is either a RAM tape or
    an unbounded register tape, sort the
    configurations (including the dummy
    configurations) increasingly by $\tau$'s head
    position and within each head position by
    timestep. Then for each adjacent pair of
    configurations $(c_t, c_{t'})$ where $c_{t'}$ is
    not a dummy configuration, check the following:

    -- (a) If $\tau$'s head is at the same location
    on $c_t$ and $c_{t'}$, check that the content of
    this indexed cell in $c_{t'}$ is correct given
    $c_t$ (this also verifies the presence of the
    input on $Ram_1$, thanks to the dummy
    configurations),

    -- (b) otherwise, check that the content of
    $\tau$'s cell indexed in $c_{t'}$ is blank.

\item Finally, check that
    $c_T$ contains $M$'s accept state.
\end{enumerate}


It can be shown that for every $x$ there exist
configurations that pass the above checks if and only if
there is $y \in \zo^T$ such that $M$ accepts $(x,y)$.


\medskip

We now complete the proof by observing that each step,
and thus the entire algorithm, can be implemented by a
circuit of size $T\cdot \log^{O(1)} T$. Recall that the number
of configurations, including dummy ones, is $O(T)$.

In Step 1, the initial configuration $c_1$ can be checked
with size $O(\log T)$.

In Step 2, each check of two adjacent configurations is
computable with size $O(\log T)$: the state check
requires computing the $O(1)$-size transition function
$\delta$, and each head position check requires computing
$\delta$ and subtracting and testing equality of two
$O(\log T)$-bit numbers. Thus the whole step is
computable with size $O(T \log T)$.

In Step 3, for each bounded register tape $Reg_i$, each
pair of adjacent configurations can be checked in size
$O(\log T)$: it requires computing $\delta$ and testing
equality of $O(\log T)$ cells each of size $O(1)$. (Here
we may think of the head position as being encoded by
marking the indexed tape cell.)
Thus the whole step is computable with size $O(T \log T)$.

In Step 4, sorting the configurations for a tape $\tau$
can be performed nondeterministically in size $T\cdot \log^{O(1)} T$ using routing networks as shown in Theorem \ref{t:nondeterministic_DB_in_nc0}.
For this additional $O(T\log T)$ bits from the end of $y'$
are used to control the switches in the routing network
(cf. \S\ref{s:routing-networks}). Then,
checking each pair of adjacent configurations for $\tau$
can be computed in size $\log^{O(1)} T$: it requires
computing $\delta$ and testing two $O(\log
T)$-bit head positions, two cells, and two timesteps.
(Note that this check involves verifying that the configurations
are correctly sorted.)
Thus the whole
step is computable with size $T\cdot \log^{O(1)} T$.

In Step 5, verifying $c_T$ can be done in size $O(1)$. Finally, we let the output gates of $C$ be those of the $m = O(T)$ configuration-checking subcircuits above.
\end{proof}

\subsection{From general to bounded-address RTMs}
\label{s-RAM2baRAM}

\begin{theorem}[From general to bounded-address RTMs]
\label{t-general2boundeaddressRTM}
For any function $T = T(n) \ge n$, there is a function
$T' = T'(n) = O(T \log T)$ such that the following holds.

Given an RTM $M$ one can construct in time $|M|^{O(1)}$ an $O(\log
T)$-bounded-address RTM $M'$ such that for every input
$x$ of length $n$, there exists $y \in \zo^T$ such that
$M$ accepts $(x,y)$ in $T$ steps if and only if there
exist $y \in \zo^T, y' \in \zo^{T'}$ such that
$M'$ accepts $(x,(y,y'))$ in $O(T \log^2 T)$ steps.
\end{theorem}

\begin{proof}
$M'$ operates by consolidating the memory locations used
by $M$ into a continuous block of length $\le O(T \log
T)$, thus allowing each location to be referenced with an
address of $\leq O(\log T)$ bits. To do this we require a
method for mapping long addresses to short addresses.
This mapping is done separately for each RAM tape, and
we now describe it for one such tape.

We define a one-to-one correspondence between the set of
all addresses of length $\leq T$ and the nodes of a full
depth-$(T+1)$ binary tree in the following natural way.
For an address $A$ with $|A| \leq T$, the corresponding
node is selected by walking from the root, choosing at
the $j$th step either the right or left child depending
if $A_j = 0$ or $1$.

Fix an input $z=(x,y)$,
 and let $\{A_1,\ldots,A_r\}$ be the set of $r \le T$
addresses accessed on the RAM tape by $M$ on input $z$.
The above set of random-access addresses corresponds to a
subtree $L$
where the $i$th address $A_i$ corresponds to a node at
depth $|A_i|+1$.

In order to analyze the size of our subtree $L$, we
observe that since each register tape is erased after a
random access, the sum of the lengths of the addresses
used during $M$'s computation is at most $T$:
\begin{equation}
\label{eq:addrTrade}
\sum_{i=1}^{r} |A_i| \leq T.
\end{equation}
Further, since the
$i$th address $A_i$ corresponds to a node at depth $|A_i|
+ 1$, we have
\begin{equation}
\text{\#nodes in $L$ } \leq \sum_{i=1}^r (|A_i| + 1) \leq T + r \leq 2T.
\end{equation}

We represent $L$ in a specific, compact fashion as a
sequence of nodes augmented with the following auxiliary
information: a left-child pointer, a right-child pointer
and a flag. The flag indicates whether or not the node
corresponds to an address $A_i$ accessed by $M$. If set
(to 1), the node stores two more items: the address $A_i$
itself and the storage space for one cell. The address
$A_i$ is used to verify the integrity of $L$ (i.e.\ that
$L$ does not use a single node for multiple addresses).
The storage space is updated throughout the simulation
with the value that $M$ would have at address $A_i$. As
pointers to the children require $O(\log T)$ bits, the
amount of memory needed to store $L$ is
\begin{equation}
|L| \le \sum_{i=1}^{\text{\#nodes}} O(\log T) + \sum_{i=1}^{r} (|A_i| + O(1)) = O(T\cdot \log T).
\end{equation}

Now we describe how $M'$ simulates $M$. This simulation
involves copying various information among tapes. To
facilitate such tasks, we equip $M'$ with an additional
constant number of tapes. We also note that copying one
bit from a RAM cell typically costs a number of steps
which is logarithmic in the address of the cell, due to
the need to write down this address which is then erased,
cf.~Definition \ref{d-RTM}. One use of the additional
tapes is to backup addresses used in jumps, to keep track
of the heads on the RAM tapes.

The additional input $y'$ to $M'$ encodes $k$ trees, $y'
= (L_1, L_2, \ldots, L_k)$, where $k$ is the number of
RAM tapes of $M$. In an initialization step, $M'$
sweeps through the whole input $(x,(y,y'))$, computes $n = |x|$
and $T = |y|$, and additionally for each node of $L_1$ with
a set flag it does the following. If the address field
$A$ is $\le |(x,y)|$, $M'$ verifies that the storage
field contains the corresponding bit of $(x,y)$,
otherwise it verifies it contains blank. $M'$ also
verifies that the storage fields of nodes on the other
trees $L_2,\ldots,L_k$ all contain blank,
and notes the address of the root of each $L_i$.

This initialization phase takes
$O(|L| \log |L|) = O(T \log^2 T)$ time.

After the initialization phase, $M'$ begins the simulation of $M$.
When $M$ leaves a jump state in a set $J_i$, $M'$
hijacks the computation and does the following.

\begin{enumerate}
\item $M'$ jumps to the address of $L_i$'s root.

\item It traverses $L_i$ according to the bits of the
    address $A$ written on $Reg_i$.
    (Namely for $j = 1,\ldots,|A|$, if the $j$th bit
    of $A$ is 0 then jump to the left child and
    otherwise jump to the right child.)
    Let $N$ be the node reached by this process.

\item It verifies that $N$'s flag is set,
    and that the stored address $= A$. It moves the
    head of $Ram_i$ to the cell which holds the
    storage for $A$.
\item Finally, it erases $Reg_i$.
\end{enumerate}

Accessing an address $A$ by this process takes time $|A|
\cdot O(\log^2 T)$, because for each bit of $A$ the child
pointer of length $O(\log T)$ is copied from a RAM tape
in $O(\log T)$ steps per bit.

Including the initialization phase and the $\leq T$ steps
not involving jumps, in total the simulation takes time
at most
\begin{equation*}
\label{eq:totalTime}
O(T \log^2 T) + T + \sum_{i=1}^r |A_i| \cdot O(\log^2 T) \leq O(T\log^2 T).
\end{equation*}

Finally, it can be seen from the construction that, for
every $x$,
 if there exists a $y$ such that $M$
accepts $(x,y)$, then there exist $y$ and trees $L_1,
..., L_k$ for which $M'$ accepts $(x, (y, L_1, ...,
L_k))$. Conversely, if $M'$ accepts $(x, (y,y'))$ for
some $y,y'$ then, as a result of the integrity check
performed in Step 3 of the simulation, for each address
$A$ accessed in a tree $L_i$ in $M'$, there corresponds a
unique address on $Ram_i$ in $M$. Thus, $M$ also
accepts $(x, y)$.
\end{proof}

\end{document}

I was advised to signal that Dieter van Melkebeek may
have a conflict of interest with our FOCS 2014 submission
"Local Reductions" by Hamid Jahanjou, Eric Miles, and
Emanuele Viola.

Thank you for your consideration. Emanuele

To do:

In the proof of Theorem 9, make it somewhat more explicit
where the "plus one" issue from the intro is handled.

It's confusing why we can't use spreading computation for the plus one operation.

In intro we use C_i for config and for circuits.  Bad.
In the final theorem C_i for a circuit checking
configurations.

In spreading computation, I think I would prefer that if
you get to MSB 1 you get to some sink state, as opposed
to outputting the first log |C'_n| bits that by our
convention are a valid label.  I just find it more
intuitive.

In the final theorem I think I prefer the alternative way
of fetching bits.  That way you can black-box spreading
computation.  Currently we spread computation of circuits
with oracle gates which we didn't handle before.  I also
think is a bit more modular.

---

Hi Omer,

Referee 1 appears to have misunderstood our main results.
 Our circuits are efficient in terms of log T, not in
 terms of T as in previous works such as [AAI+01].  In
 particular, Result (3) in the abstract in our submission
 is new.

We were wondering if it might perhaps be appropriate to
forward this to all of the referees, just to prevent any
possible spreading of misinformation?

Thank you and the referees very much for your work.
Emanuele

---------------------------------------------------

Hi Boaz, Sanjeev,

in Remark B.6 of your nice web-addendum
http://www.cs.princeton.edu/theory/uploads/Compbook/accnexp.pdf
you suggest that approximating the acceptance probability
of an ACC^0 circuit suffices to obtain NEXP not in ACC^0.
As far as I know, no such result is available due to the
complexity of PCP reductions.

Best, Emanuele

---------------------------------------------------

It seems we can improve Theorem 5 Arora, Steurer, Wigderson to replace NC^0
to AC^0.  Say INDEPENDENT-SET.  Recall we connect
all literals in same clauses, and in different clauses
if they're complement.

This is however not possible in the adjacency
representation, how to know if two variables are the same?
Not even in the standard adjacency list representation.
But it seems possible in this model.  There is an NC^0 circuit C
such the graph has edge set {{x,C(x)} | x }, with some edges repeated multiple times.

To show this recall the reduction from circuit-sat to IND-SET.
Pairs of nodes at distance \ge 2 don't share any variable,
so they're automatically not in the set.

UPDATE: It seems there's an issue regarding the fan-in /
fan-out of gates corresponding to invalid configurations
that prevents us to getting this easily.

Known issues (about ram2sat):

- In the bounded-address to circuit-sat, and in the tree
theorem, technically the length of the additional $y$ has
to be fixed, i.e. a T log^b T for some fixed a, b. This
is to avoid the interpretation that machines may get
inputs of different lengths, and obtain extra information
from this. Currently, we use the ``O'' notation which is
a slight abuse.

- Shouldn't the picture start with dummy config? In some
rows at least?